\numberwithin{equation}{section}
\newtheorem{theorem}{Theorem}[section]
\newtheorem{proposition}[theorem]{Proposition}
\theoremstyle{definition}
\newtheorem{definition}[theorem]{Definition}
\newtheorem*{remark}{Remark}
\def\emph#1{#1}
\def\pd#1#2{\frac{\partial#1}{\partial#2}}
\newcommand{\at}[1]{\Big\vert_{#1}}
\newcommand{\pai}[2]{\langle\,#1\,,#2\,\rangle}  
\newcommand{\set}[2]{\left\{\,#1\left.\vphantom{#1#2}\,\right\vert\,#2\,\right\}} 
\newcommand{\Real}{\mathbb{R}}
\newcommand{\cinfty}[1]{C^\infty(#1)}
\newcommand{\map}[3]{#1\colon#2\rightarrow#3}    
\newcommand{\vectorfields}[1]{\mathfrak{X}(#1)}
\newcommand{\Sec}[2][]{\operatorname{Sec}\nolimits_{#1}(#2)}
\let\sec\Sec
\DeclareMathOperator{\pr}{pr}
\DeclareMathOperator{\id}{id}
\newcommand{\T}{\mathcal{T}}
\newcommand{\X}{\mathcal{X}}
\newcommand{\V}{\mathcal{V}}
\newcommand{\PJ}[1][E]{\mathcal{P}(J,#1)}
\newcommand{\Ti}[1][]{\boldsymbol{T}^{#1}}
\newcommand{\FL}{\mathcal{F}_L}
\newcommand{\g}{\mathfrak{g}} 
\newcommand{\G}{\boldsymbol{G}} 
\renewcommand{\t}{\boldsymbol{t}} 
\newcommand{\s}{\boldsymbol{s}} 
\def\E{\mathbb{E}} 
\def\S{\mathbb{S}} 
\def\p{\mathsf{p}} 
\def\ad{\operatorname{ad}} 
\def\d{\mathrm{d}}
\def\cala{\mathcal{A}}
\def\calf{\mathcal{F}}
\def\cals{\mathcal{S}}
\let\Info\thanks
\def\Info#1{\relax}
\let\stdparagraph\paragraph
\def\paragraph#1{\par\medskip\stdparagraph{#1}}
\begin{document}


\Info{Submitted to AIMS Journal of Geometric Mechanics on August 21st 2014}

\title{Higher-order Variational Calculus on Lie algebroids}

\author[E.\ Mart\'{\i}nez]{Eduardo Mart\'{\i}nez}
\address{Eduardo Mart\'{\i}nez:
IUMA-Departamento de Matem\'atica Aplicada,
Universidad de Zaragoza,
Pedro Cerbuna 12,
50009 Zaragoza, Spain}
\email{emf@posta.unizar.es}
\thanks{Partial financial support from MINECO (Spain) grant MTM2012-33575, and from Gobierno de Arag\'on (Spain) grant DGA-E24/1 is acknowledged}

\keywords{Variational calculus, Lie algebroids, Lagrangian Mechanics, Higher order mechanics}
\subjclass[2010]{%
	70H25, 
	70H30, 
	70H50, 
	37J15, 
	58K05, 
	70H03, 
	37K05, 
}

\begin{abstract} 
The equations for the critical points of the action functional defined by a Lagrangian depending on higher-order derivatives of admissible curves on a Lie algebroid are found. The relation with Euler-Poincar\'e and Lagrange Poincar\'e type equations is studied. Reduction and reconstruction results for such systems are established.
\end{abstract}

\maketitle


\section{Introduction}

In this paper we study optimization problems defined by a cost functional which depends on higher-order derivatives of admissible curves on a Lie algebroid. Examples of this type of problems are the optimal control of dynamical systems where the system to be controlled is a mechanical system, and hence depends on accelerations~\cite{Ab-quasi,AbCa,CoMaZu,Co}, trajectory planning problems in control theory~\cite{KySa}, key-framed animations in computer graphics~\cite{NoHePa}, and in general, problems of interpolation and approximation of curves on Riemannian manifolds~\cite{CaSiCr, MaSiKr}. In many of these examples the presence of symmetries is used to reduce the difficulty of the problem.

The advantage of the Lie algebroid approach is its inclusive nature, under the same formalism we can describe many systems which are apparently different~\cite{Weinstein, LMLA, NHLSLA,GrGrUr} and hence it allows a unified description even in the case of a reduced system when symmetries are present in the problem. An alternative approach consists in a case by case study using Euler-Poincar{\'e} and Lagrange-Poincar{\'e} reduction techniques as in~\cite{GBHoetal, GBHoRa}.

Previous work on the variational description of first-order Lagrangian systems defined on Lie algebroids provides a convenient departure point for the generalization presented here. In~\cite{VCLA} it was shown that the Euler-Lagrange equations for a Lagrangian system defined on a Lie algebroid are the critical points for the action functional defined on an adequate Banach manifold of admissible curves satisfying boundary conditions. Such a manifold structure is a foliated one, the leaves being  the $E$-homotopy classes of admissible curves.

It is frequently argued that the variational principle for reduced systems, and in general for systems on Lie algebroids, is a constrained variational principle, in the sense that some additional constraints are imposed to the admissible variations. The point of view of~\cite{VCLA} and the present paper is different. The set of admissible curves where the action is defined is endowed with a reasonable Banach manifold structure. The tangent space to such manifold of admissible curves is precisely the whole set of infinitesimal admissible variations. Therefore no additional constraint to the infinitesimal variations is imposed, or in other words, they are as constrained as in the standard case when formulated directly in the tangent bundle instead of on the base manifold. In such standard case the relevant topology on the set $C^1(I,TM)$ is the one induced by the manifold structure of $C^2({\Real},M)$, which is a foliated structure where each leaf (connected component) is a the set of curves of the form $\dot{{\gamma}}$ with ${\gamma}$ in a given homotopy class. A similar construction can be performed in the case of admissible curves on a Lie algebroid $E$ by using the notion of $E$-homotopy~\cite{Rui,VCLA}. 

Whether the reader agrees with the above argument or not, it should be clear that the variational principle stated here is a \textsl{bona fide} standard variational principle, as it is needed in optimization problems: the solutions of the higher-order Euler-Lagrange equations are the critical points of the action functional which is a smooth function on a Banach manifold. In this sense one should notice that some generalized variational principles that appeared in the literature~\cite{GrGr, JoRo} do not satisfy this property.

\subsubsection*{Description of the results and organization of the paper}

For a Lie algebroid $E$ we consider the set $E^k$ of $(k-1)$-jets of admissible curves on $E$. Given a function $L{\in}\cinfty{E^k}$, which will be called the Lagrangian, we want to find the maxima/minima/critical points of a cost/action functional $S$ defined by  
\[
S(a)=\int_{t_0}^{t_1}L(a^k(t))\,dt
\]
among the set of admissible curves which are $E$-homotopic to a given admissible curve $a_0$ and satisfy some boundary conditions. Finite variations $a_s(t)=a(s,t)$ are given by $E$-homotopies and infinitesimal variations are just the $s$-derivative of~$a_s$. $E$-homotopies are special morphisms of Lie algebroids ${\phi}={\alpha}(s,t)dt+{\beta}(s,t)ds$ which satisfy ${\beta}(s,t_0)={\beta}(s,t_1)=0$. This condition corresponds to the fixed endpoint condition in the standard case. The corresponding infinitesimal variation depends only on the values of ${\sigma}(t)={\beta}(0,t)$ and are denoted ${\Xi}^k_{a_0}{\sigma}$. Since each $E$-homotopy class is a Banach manifold~\cite{Rui,VCLA}, we can properly talk about critical points of the function $S$. 

The differential equations satisfied by the critical points are called the higher-order Euler-Lagrange equations, and generalize the first-order Euler-Lagrange equations defined by Weinstein~\cite{Weinstein} (see also~\cite{LMLA,GrGrUr}). It will be shown that, in particular, this equations are the higher-order Euler-Poincar{\'e} equations when the Lie algebroid is a Lie algebra~\cite{GBHoetal}, the higher-order Lagrange-Poincar{\'e} equations when the Lie algebroid is the Atiyah algebroid associated to a principal bundle~\cite{GBHoRa}, or the higher-order Euler-Poincar{\'e} equations with advected parameters when the Lie algebroid is an action algebroid~\cite{GBHoetal,GBHoRa}, in addition to the standard higher-order Euler-Lagrange equations when the Lie algebroid is the tangent bundle to a manifold.

One of the advantages of the formalism of Lie algebroids is that morphisms of Lie algebroids can serve to relate Lie algebroids of the different types mentioned above. When two Lagrangians are related by a morphism of Lie algebroids, the corresponding variational problems are also related and this allows to easily deduce correspondences between the critical points of the associated action functionals, which amounts to reduction theorems showing, among other results, that the standard higher-order Euler-Lagrange equations for a higher-order left-invariant Lagrangian (with/without parameters) on a Lie group reduce to the higher-order Euler-Poincar{\'e} equations of the reduced Lagrangian (with/without advected parameters) on the Lie algebra, or that the standard higher-order Euler-Lagrange equations for a higher-order invariant Lagrangian on a principal bundle reduce to the higher-order Lagrange-Poincar{\'e} equations of the reduced Lagrangian on the Atiyah algebroid.

The paper is organized as follows. In Section~\ref{preliminaries} we will introduce the necessary preliminary results about higher-order tangent bundles and Lie algebroids, and we will fix some notation. In Section~\ref{jets.of.paths} we will define the space of jets of admissible curves and we will study its basic properties. In Section~\ref{variational.vector.fields} we will find the properties and the local expression of the variational vector fields defined by $E$-homotopies, and its relation to the complete lift of a section of $E$. In order to find an intrinsic expression of the Euler-Lagrange equations, we will define in Section~\ref{vertical.endomorphism} two differential operators, the variational operator and the Cartan operator. This will be done by introducing the vertical endomorphism. In Section~\ref{variational.description} we will find the critical points of the action functional in terms of the variational operator and we will deduce its coordinate expression, as well as those of the Cartan form and the Legendre transformation. Also a version of Noether's theorem follows easily from the variational character of the equations. The relevant typical examples are given in~Section~\ref{examples}. Finally, in Section~\ref{reduction.reconstruction} we will study the transformation properties of critical points under morphisms of Lie algebroids which readily amount to reduction results among the different kind of equations for different Lie algebroids.


\section{Preliminaries}
\label{preliminaries}


\subsection{Higher-order tangent bundles}
Let $M$ be a manifold. For a curve $\map{{\gamma}}{{\Real}}{M}$, defined in some open interval containing the origin in ${\Real}$, we denote by $[{\gamma}]^k=j^k_0{\gamma}$ the $k$-jet of ${\gamma}$ at $0$, which is said to be the $k$th-order velocity of ${\gamma}$ or simply the $k$-velocity of ${\gamma}$. The set of $k$-velocities of curves in $M$ is a manifold $T^kM$, known as the $k$th-order tangent manifold to $M$. For $k=1$ we have $T^1M=TM$, the tangent bundle to $M$. The projections ${\tau}^M_{k,l}:[{\gamma}]^k\mapsto[{\gamma}]^l$ define bundles 
\[
T^kM\xrightarrow{{\tau}^M_{k,l}}T^lM\xrightarrow{{\tau}^M_{l,0}} M,\qquad\text{for $k>l>0$}.
\] 
See~\cite{LeRo,CrSaCa} for more information.

A vector tangent to $T^kM$ can be described by a 1-parameter family of curves $\map{{\gamma}}{{\Real}^2}{M}$ defined locally in a neighborhood of the origin in ${\Real}^2$. Concretely, the family ${\gamma}_s(t)={\gamma}(s,t)$ defines a curve in $T^kM$ by fixing $s$ and taking $k$-jets $[{\gamma}_s]^k$. The vector $\frac{d}{ds}[{\gamma}_s]^k\at{s=0}$ tangent to such curve at $s=0$ is tangent to $T^kM$ at the point $[{\gamma}_0]^k$. With a different notation, that will be used in what follows in this paper, it is the vector $[s\mapsto[t\mapsto {\gamma}(s,t)]^k]^1$.

For a curve $\map{{\gamma}}{{\Real}}{M}$ we will denote by ${\gamma}^{(k)}$ the curve $\map{{\gamma}^{(k)}}{{\Real}}{T^kM}$ given by ${\gamma}^{(k)}(t)=[s\mapsto {\gamma}(t+s)]^k$. 

There is a canonical injective immersion $\map{i^M_{k,1}}{T^{k+1}M}{T(T^kM)}$, defined by $i^M_{k,1}([{\gamma}]^{k+1})=[s\mapsto[t\mapsto {\gamma}(t+s)]^k]^1$. Such an immersion allows to identify $(k+1)$-velocities with the vectors tangent to $T^kM$ which are in the image of $i^M_{k,1}$. 

The canonical flip map on $T^kM$ will be denoted by $\map{{\chi}_k}{T^kTM}{TT^kM}$. It can be easily defined in terms of 1-parameter families of curves by 
\begin{equation}
{\chi}_k([t\mapsto[s\mapsto {\gamma}(s,t)]^1]^k)=[s\mapsto[t\mapsto {\gamma}(s,t)]^k]^1.
\end{equation}
Fixing local coordinates $x^i$ in $M$, we have an induced system of local coordinates $(x^i_{(j)})$, $j=0,\ldots,k$, of $T^kM$ given by $x^i_{(j)}([{\gamma}]^k)=\frac{d^j{\gamma}^i}{dt^j}(0)$. 


\subsection{Lie algebroids}
A Lie algebroid structure on a vector bundle $\map{\tau}{E}{M}$ is given by a vector bundle map $\map{\rho}{E}{TM}$ over the identity in $M$, called the anchor, together with a Lie algebra structure on the $\cinfty{M}$-module of sections of $E$ such that the compatibility condition 
$
[\sigma,f\eta]=(\rho(\sigma)f)\eta+f[\sigma,\eta]
$
is satisfied for every $f\in\cinfty{M}$ and every $\sigma,\eta\in\sec{E}$. See~\cite{Mackenzie2} for more information on Lie algebroids.

In what concerns to Variational Calculus and Mechanics, it is convenient to think of a Lie
algebroid as a generalization of the tangent bundle of $M$. One
regards an element $a$ of $E$ as a generalized velocity, and the
actual velocity $v$ is obtained when applying the anchor to $a$, i.e.,
$v=\rho(a)$. A curve $\map{a}{[t_0,t_1]}{E}$ is said to be
admissible, or an $E$-path, if $\dot{\gamma}(t)=\rho(a(t))$, where $\gamma(t)=\tau(a(t))$
is the base curve.

The Lie algebroid structure is equivalent to the existence of a  degree~1 derivation, $\map{d}{\Sec{\wedge^kE^*}}{\Sec{\wedge^{k+1}E^*}}$, which is a cohomology operator $d^2=0$, and is known as the exterior differential on $E$. A morphism of Lie algebroids is a vector bundle map $\map{{\Phi}}{E}{E'}$ such that ${\Phi}^\star{\circ}d=d{\circ}{\Phi}^\star$. We may also define the Lie derivative with respect to a section $\sigma$ of $E$ as the operator $\map{d_\sigma}{\Sec{\wedge^k E^*}}{\Sec{\wedge^k E^*}}$ given by $d_\sigma=i_\sigma\circ d+d\circ i_\sigma$.  Along this paper, the symbol $d$ stands for the exterior differential on a Lie algebroid while the non-slanted symbol $\d$ stands for the standard exterior differential on a manifold. 

A local coordinate system $(x^i)$, $i=1,\ldots,n=\dim(M)$,  in the base manifold $M$ and a local
basis $\{e_{\alpha}\}$, ${\alpha}=1,\ldots,m=\operatorname{rank}(E)$, of sections of $E$ determine a local coordinate system  $(x^i, y^{\alpha})$ on $E$.  The anchor and the bracket are locally determined by the structure functions $\rho^i_\alpha$ and $C^\alpha_{\beta\gamma}$ on $M$ given by
\begin{equation}
\rho (e_{\alpha})=\rho _{\alpha}^{i}\frac{\partial}{\partial x^i}
\qquad\text{and}\qquad
[e_{\alpha}, e_{\beta}]=C_{\alpha\beta}^{\gamma}\ e_{\gamma}.
\end{equation}
The exterior differential $d$ on the Lie algebroid is locally determined by
\begin{equation}
dx^i=\rho^i_{\alpha}e^{\alpha}\qquad\text{and}\qquad de^{\gamma}=-\frac{1}{2}
C^{\gamma}_{\alpha\beta} e^{\alpha}\wedge e^{\beta},
\end{equation}
where $\{e^{\alpha}\}$ is the dual basis of $\{e_{\alpha}\}$. 

\paragraph{The $E$-tangent to a fibration}\cite{LMLA,Medina,LSDLA}. Let $E$ be a Lie algebroid over a manifold $M$ and $\map{{\pi}}{P}{M}$ be a fibration. The \emph{$E$-tangent} to $P$ is the Lie algebroid $\map{{\tau}^E_P}{\T^EP}{P}$ whose fibre at $p$ is the vector space 
\[
\T^E_pP =\set{(b,v)\in E_{{\pi}(p)}\times T_pP}{\rho(b)=T_p\pi(v)},
\]
the anchor is ${\rho}(b,v)=v$ and the bracket is determined by the bracket of projectable sections. We will use the redundant notation $(p,b,v)$ to denote the element $(b,v)\in\T^E_pP$. The projection onto the second factor $(p,b,v)\mapsto b$ is a morphism of Lie
algebroids and will be denoted by $\map{\T{\pi}}{\T^EP}{E}$.

Given local coordinates $(x^i,u^A)$ on $P$ and a local basis
$\{e_\alpha\}$ of sections of $E$, we can define a local basis
$\{\X_\alpha,\V_A\}$ of sections of $\T^EP$ by
\[
\X_\alpha(p)
=\Bigl(p,e_\alpha(\pi(p)),\rho^i_\alpha\pd{}{x^i}\at{p}\Bigr) \qquad\text{and}\qquad
\V_A(p) = \Bigl(p,0_{{\pi}(p)},\pd{}{u^A}\at{p}\Bigr).
\]
Locally, the Lie brackets of the elements of the basis are
\begin{equation}
[\X_\alpha,\X_\beta]= C^\gamma_{\alpha\beta}\:\X_\gamma,
\qquad 
[\X_\alpha,\V_B]=0 
\qquad\text{and}\qquad
[V_A,\V_B]=0,
\end{equation}
and, therefore, the exterior differential is determined by
\begin{equation}
\begin{aligned}
  &dx^i=\rho^i_\alpha \X^\alpha,
  &&du^A=\V^A,\\
  &d\X^\gamma=-\frac{1}{2}C^\gamma_{\alpha\beta}\X^\alpha\wedge\X^\beta,\qquad
  &&d\V^A=0,
\end{aligned}
\end{equation}
where $\{\X^\alpha,\V^A\}$ is the dual basis to $\{\X_\alpha,\V_A\}$.

\paragraph{First-order variational vector fields}
Within the context of Variational Calculus on Lie algebroids, a variation of an admissible curve $\map{a}{[t_0,t_1]}{E}$ is associated to a morphism of Lie algebroids $\map{{\alpha}(s,t)dt+{\beta}(s,t)ds}{T{\Real}^2}{E}$ such that $a(t)={\alpha}(0,t)$ and ${\beta}(s,t_0)=0$, ${\beta}(s,t_1)=0$. The variational vector field $\pd{{\alpha}}{s}(0,t)$ is determined by ${\sigma}(t)={\beta}(0,t)$ and it is denoted ${\Xi}_a{\sigma}(t)$. In local coordinates, it has the expression 
\begin{equation}
\label{first-order.variational.vector.field}
{\Xi}_a({\sigma})(t)={\rho}^i_{\alpha}({\gamma}(t)){\sigma}^{\alpha}(t)\pd{}{x^i}\at{a(t)}
+\Bigl(\dot{{\sigma}}^{\alpha}(t)+C^{\alpha}_{{\beta}{\mu}}({\gamma}(t))a^{\beta}(t){\sigma}^{\mu}(t)\Bigr)\pd{}{y^{\alpha}}\at{a(t)}.
\end{equation}
where $a$ and ${\sigma}$ have local expression $a(t)=(\gamma^i(t),a^\alpha(t))$ and ${\sigma}(t)={\sigma}^{\alpha}(t)e_{\alpha}({\gamma}(t))$. See~\cite{LAGGM,VCLA}.

The variational vector field can also be defined in terms of the canonical involution of the bundle $\T^EE$. See~\cite{LSDLA, VCLA} for the details.


\section{Jets of admissible curves}
\label{jets.of.paths}

Consider a Lie algebroid\footnote{Nearly all the material in this section remains valid for an anchored vector bundle $(\map{{\tau}}{E}{M},{\rho})$} $(\map{{\tau}}{E}{M},{\rho},[\ ,\ ])$. A curve $\map{a}{I{\subset}{\Real}}{E}$ is said to be an \emph{admissible curve} in $E$, or an \emph{$E$-path}, if it satisfies ${\rho}{\circ}a=\dot{{\gamma}}$, where ${\gamma}={\tau}{\circ}a$ is the base curve. 

\begin{definition}
For $k{\in}\mathbb{N}$, we denote by $E^k$ the set of $(k-1)$-jets of admissible curves on $E$:
\[
E^k=\set{[a]^{k-1}{\in}T^{k-1}E}{\text{$a$ is an admissible curve in $E$}}.
\]
\end{definition}

These spaces were introduced by Colombo and Mart{\'\i}n de Diego in~\cite{CoMa-Z}. See also~\cite{JoRo} for a simple exposition.

\begin{remark}
Notice the grading $E^1=E$, $E^2{\subset}TE$, and in general $E^k{\subset}T^{k-1}E$. The notation is suggested by the intended use in higher-order mechanics, where the elements in $E$ are already considered as quasi-velocities (i.e. 1-quasi-velocities are in $E^1\equiv E$). For instance, in the standard case $E=TM$, we have $E^1=TM$, $E^2=T^2M$, etc. In our notation, plain indices will indicate the space where the object is defined, while indices between parenthesis will indicate jet prolongation or the number of derivatives.
\end{remark}

To gain some intuition, it is convenient to describe the situation locally. Taking local coordinates $(x^i,y^{\alpha})$ on $E$, an admissible curve $a(t)=({\gamma}^i(t),a^{\alpha}(t))$ is determined by the function $a^{\alpha}(t)$ and the initial value ${\gamma}^i(0)$, since the function ${\gamma}^i(t)$ can be determined as the solution of the initial value problem $\dot{x}^i={\rho}^i_{\alpha}(x)a^{\alpha}(t)$ with initial condition $x(0)={\gamma}(0)$. Therefore, the $(k-1)$-jet of $a(t)$ corresponds just to the $(k-1)$-jet of the function $a^{\alpha}(t)$ together with the initial value ${\gamma}^i(0)$, the $0$-jet of ${\gamma}^i(t)$. The natural coordinates $\bigl(x^i_{(j)},y^{\alpha}_{(j)}\bigr)$ of $[a]^{k-1}{\in}T^{k-1}E$ are given by 
\begin{equation}
\label{local.coordinates}
\begin{aligned}
x^i_{(0)}&={\gamma}^i(0),&\\
y^{\alpha}_{(r)}&=\frac{d^{r-1}a^{\alpha}}{dt^{r-1}}(0),&r=1,\ldots,k-1,\\
x^i_{(r)}&={\Psi}^i_r\left({\gamma}^i(0),a^{\alpha}(0),\ldots,\frac{d^{r-1}a^{\alpha}}{dt^{r-1}}(0)\right),&r=1,\ldots,k-1,
\end{aligned}
\end{equation}
where ${\Psi}^i_r$ are smooth functions depending also smoothly on ${\rho}^i_{\alpha}$ and its partial derivatives up to order $r-1$, obtained by taking total derivatives of the admissibility condition $\dot{x}^i={\rho}^i_{\alpha}a^{\alpha}$. 

Conversely, given a point $(x^i_0,y^{\alpha}_1,\ldots,y^{\alpha}_k){\in}{\Real}^n\times{\Real}^{k{\cdot}m}$ the admissible curve given by $a^{\alpha}(t)=\sum_{j=0}^{k-1}\frac{1}{j!}y^{\alpha}_{j+1}t^j$ and the solution ${\gamma}^i(t)$ of the initial value problem $\dot{x}^i={\rho}^i_{\alpha}(x)a^{\alpha}(t)$, $x^i(0)=x^i_0$, is an admissible curve whose $(k-1)$-jet has coordinates as given in~\eqref{local.coordinates} with $\frac{d^{r-1}a^{\alpha}}{dt^{r-1}}(0)=y^{\alpha}_r$ for $r=1,\ldots,k-1$. 

It follows that $E^k$ is a smooth submanifold of $T^{k-1}E$ with dimension $\dim E^k=n+km$, and that we can take a local coordinate system $(x^i,y^{\alpha}_r)$ of the form given above, $x^i=x^i_{(0)}$, $y^{\alpha}_r=y^{\alpha}_{(r-1)}$.

We will denote by $\map{{\tau}_{k,l}}{E^k}{E^l}$  the restriction of the natural jet bundle projection $\map{{\tau}^E_{k-1,l-1}}{T^{k-1}E}{T^{l-1}E}$. Then it is straightforward to prove the following result.

\begin{proposition}
The set $E^k$ is a submanifold of $T^kE$. The dimensi{\'o}n of $E^k$ is $\dim(E^k)=\dim(M)+k{\cdot}\operatorname{rank}(E)$. For $k>l>0$ we have that $\map{{\tau}_{k,l}}{E^k}{E^l}$  is a smooth fibre bundle. For $k=l+1$ it is an affine bundle.
\end{proposition}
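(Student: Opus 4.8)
The plan is to derive everything from the local coordinate analysis carried out just before the statement. First I would note that the claim "$E^k$ is a submanifold of $T^kE$" should read "$E^k$ is a submanifold of $T^{k-1}E$" (as stated in the definition and the preceding discussion), and this is exactly the content of the displayed computation in~\eqref{local.coordinates}: the natural coordinates $\bigl(x^i_{(j)},y^{\alpha}_{(j)}\bigr)$ on $T^{k-1}E$ restrict on $E^k$ to functions that are freely parametrized by $\bigl(x^i_{(0)},y^{\alpha}_{(0)},\dots,y^{\alpha}_{(k-1)}\bigr)$, with the remaining coordinates $x^i_{(r)}$, $r=1,\dots,k-1$, given by the smooth functions ${\Psi}^i_r$ of these. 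Concretely, I would exhibit the local map sending $(x^i,y^{\alpha}_r)\mapsto\bigl(x^i,{\Psi}^i_1,\dots,{\Psi}^i_{k-1},y^{\alpha}_1,\dots,y^{\alpha}_k\bigr)$ and observe that it is a smooth injective immersion whose image is closed in the coordinate domain (it is the graph of a smooth map over the coordinates that are retained), so by the usual patching argument $E^k$ is an embedded submanifold of $T^{k-1}E$ of dimension $n+km=\dim(M)+k\operatorname{rank}(E)$. This also establishes the dimension count, which is the second sentence.

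Next, for $k>l>0$ I would identify $\map{{\tau}_{k,l}}{E^k}{E^l}$ in the adapted coordinates: using $x^i=x^i_{(0)}$ and $y^{\alpha}_r=y^{\alpha}_{(r-1)}$ on both $E^k$ and $E^l$, the projection reads
\[
{\tau}_{k,l}\colon(x^i,y^{\alpha}_1,\dots,y^{\alpha}_k)\longmapsto(x^i,y^{\alpha}_1,\dots,y^{\alpha}_l),
\]
because the jet projection on $T^{k-1}E$ simply forgets the higher-order coordinates and the $x^i_{(r)}$ were eliminated via ${\Psi}^i_r$ consistently on both spaces (the functions ${\Psi}^i_r$ for $r\le l-1$ are the same on $E^k$ and $E^l$). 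This is manifestly a surjective submersion with fibre ${\Real}^{(k-l)m}$, and the adapted charts trivialize it, so it is a (locally trivial) smooth fibre bundle. The only point requiring a little care is the compatibility of the local trivializations on overlaps, which follows because the coordinate changes on $E^k$ are induced by those on $E$ and $M$ together with their prolongations, hence are fibred over the coordinate changes on $E^l$.

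Finally, for $k=l+1$ I would show the transition functions of ${\tau}_{l+1,l}$ are affine on the fibres. Under a change of local basis $e_{\alpha}\mapsto A^{\beta}_{\alpha}(x)e'_{\beta}$ and coordinates $x\mapsto x'(x)$, the quasi-velocity coordinates transform as $y'^{\beta}_r=\sum$ (terms built from $A$, $\partial x'/\partial x$, $\rho$, and derivatives thereof, applied to $y^{\alpha}_1,\dots,y^{\alpha}_r$), and one checks by induction on $r$ that the dependence on the top variable $y^{\alpha}_r$ is linear: the $r$-th total derivative of $a^{\beta}(t)$ in the new frame picks up $y^{\alpha}_r$ exactly once, multiplied by a function of $x$ only, with all genuinely nonlinear contributions involving strictly lower-order $y^{\alpha}_j$, $j<r$. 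Hence on $E^{l+1}$ the top-level coordinates $y^{\alpha}_{l+1}$ transform affinely (linear part depending only on the base point of $E^l$, inhomogeneous part a function of $x,y^{\alpha}_1,\dots,y^{\alpha}_l$), which is precisely the statement that ${\tau}_{l+1,l}$ is an affine bundle. I expect this last induction — isolating the linear-in-top-order part of the prolonged coordinate changes — to be the only genuinely nontrivial step; everything else is bookkeeping with the charts already constructed in~\eqref{local.coordinates}.
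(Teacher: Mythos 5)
Your proposal is correct and follows the same route the paper intends: the paper gives no explicit proof, declaring the result ``straightforward'' from the local coordinate construction in~\eqref{local.coordinates}, and your argument is precisely a careful elaboration of that construction (including the correct observation that $T^kE$ should read $T^{k-1}E$). Your closing induction showing that the top-order coordinate transforms affinely under prolonged changes of frame is the right way to supply the affine-bundle claim the paper leaves implicit.
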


See Section~\ref{examples} bellow for the construction of $E^k$ for some concrete examples of Lie algebroids.
 
For an admissible curve $\map{a}{{\Real}}{E}$ we will denote by $\map{{a}^k}{{\Real}}{E^k}$ the natural jet-prolongation of $a$ to $E^k$, given by 
\[
{a}^k(t)=[s\mapsto a(s+t)]^{k-1}.
\]
Notice that with the above notations $a^k(t)=a^{(k-1)}(t)$.

\begin{remark}
The manifold $E^{k+1}$ can also be defined as a subset of the $E$-tangent to $E^{k}$ by the following inductive procedure. Starting with $E^1=E$, and once we have constructed $E^k$, we define $E^{k+1}$ as follows.  Consider the submersion $\map{{\tau}_{k,0}}{E^k}{M}$ and the $E$-tangent $\T^EE^k$ to $E^k$ with respect to such projection. Then we define 
\[
E^{k+1}=\set{Z{\in}\T^EE^k}{{\tau}^E_{E^k}(Z)=\T{\tau}_{k,k-1}(Z)}.
\] 
For instance, $E^2=\{(a,a,V)\in\T^EE\}$, is the set of admissible elements, denoted by $\operatorname{Adm}(E)$ in~\cite{LMLA}. This construction allows to consider $E^{k+1}$ as a submanifold $E^{k+1}{\subset}\T^EE^k$. However, it is preferable to consider $E^k$ as a separate manifold, and to define an embedding into $\T^EE^k$ as described in the next paragraph. 
\end{remark}

\paragraph{Canonical inclusion into the $E$-tangent}

There exists a canonical injective immersion map $\map{i_{k,1}}{E^{k+1}}{\T^EE^k}$ defined by 
\[
i_{k,1}([a]^k)=\left([a]^{k-1},[a]^0,[s\mapsto[t\mapsto a(s+t)]^{k-1}]^1\right),
\]
which does not depend on the representative $E$-path $a(t)$ of the $k$-jet $[a]^{k}{\in}E^{k+1}$. 

In terms of the natural jet-prolongation of admissible curves, the canonical immersion $i_{k,1}$ is given by
\[
i_{k,1}({a}^{k+1}(t))=\left({a}^k(t),a(t),\frac{d{a}^k}{dt}(t)\right).
\]

The canonical immersion can be considered as a section of $\T^EE^k$ along ${\tau}_{k+1,k}$. We will use the symbol $\Ti=i_{k,1}$ to denote such a section, that is
\begin{equation}
\label{Ti}
\Ti([a]^k)=\left(a^k(0),a(0),\frac{da^k}{dt}(0)\right).
\end{equation}
For $k=1$ the section $\Ti$ was already introduced in~\cite{LMLA}.

The associated derivation $d_{\Ti}$ maps functions defined on $E^k$ to functions defined on $E^{k+1}$, and will be called the \emph{total time derivative operator}. The reason for that name is that for every function $F{\in}\cinfty{E^k}$ and every admissible curve $a$ we have 
\[
d_{\Ti}F({a}^{k+1}(t))=\frac{d}{dt} F({a}^k(t)).
\]
More generally
\[
d^r_{\Ti}F(a^{k+r}(t))=\frac{d^r}{dt^r}F({a}^k(t)).
\]

Associated to the coordinate system $(x^i,y^{\alpha}_r)$ in $E^k$ we have the local basis $\{\X_{\alpha},\V^r_{\alpha}\}$ of sections of $\T^EE^k\to E^k$, as in Section~\ref{preliminaries},
\[
\X_{\alpha}(A)=\left(A,e_{\alpha}(m),{\rho}^i_{\alpha}\pd{}{x^i}\at{A}\right),
\qquad
\V_{\alpha}^r(A)=\left(A,0_m,\pd{}{y^{\alpha}_r}\at{A}\right),
\qquad m={\tau}_{k,0}(A).
\]
To write more compact expressions we will use the notation $\V^0_{\alpha}=\X_{\alpha}$.

In local coordinates we have the local expressions
\begin{equation}
\Ti=y^{\alpha}_1\X_{\alpha}+\sum_{j=1}^ky^{\alpha}_{j+1}\,\V^j_{\alpha}=\sum_{j=0}^ky^{\alpha}_{j+1}\,\V^j_{\alpha},
\end{equation}
and 
\begin{equation}
d_{\Ti}F={\rho}^i_{\alpha}y^{\alpha}_1\pd{F}{x^i}+\sum_{j=1}^k y^{\alpha}_{j+1}\pd{F}{y^{\alpha}_j}.
\end{equation}

When $E$ is a Lie algebroid, the operator $d_{\Ti}$ acts also on $p$-forms on $E^k$ producing a $p$-form on $E^{k+1}$. For the dual basis $\{\X^{\alpha},\V^{\alpha}_r\}$ of the local basis $\{\X_{\alpha},\V_{\alpha}^r\}$ we have 
\begin{equation}
\label{dT.of.the.basis}
\begin{aligned}
d_{\Ti}\X^{\alpha}&=\V^{\alpha}_1-C^{\alpha}_{{\beta}{\gamma}}y^{\beta}_1\X^{\gamma},&
\\
d_{\Ti}\V^{\alpha}_r&=\V^{\alpha}_{r+1},&r=1,\ldots,k
\end{aligned}
\end{equation}
which follows easily from the definition $d_{\Ti}=d{\circ}i_{\Ti}+i_{\Ti}{\circ}d$. 

\medskip

\paragraph{Jet prolongation of admissible maps}

Consider a second Lie algebroid $\map{{\tau}'}{E'}{M'}$ with anchor ${\rho}'$. A vector bundle map $\map{{\Phi}}{E}{E'}$ is said to be admissible if it maps admissible curves into admissible curves. If $\map{{\varphi}}{M}{M'}$ is the base map, then ${\Phi}$ is admissible if and only if ${\rho}'{\circ}{\Phi}=T{\varphi}{\circ}{\rho}$. 

An admissible map induces a map $\map{{\Phi}^k}{E^k}{E'^k}$ by jet-prolongation as follows: if $a$ is an admissible curve then ${\Phi}{\circ}a$ is admissible and we set ${\Phi}^k([a]^{k-1})=[{\Phi}{\circ}a]^{k-1}$. In other words,  ${\Phi}^k$ is the restriction of the $(k-1)$-jet extension $\map{T^{k-1}{\Phi}}{T^{k-1}E}{T^{k-1}E'}$ of ${\Phi}$ to $E^k$, which takes values in $E'^k$. It is clear from the definition that ${\tau}_{k,l}{\circ}{\Phi}^k={\Phi}^l{\circ}{\tau}_{k,l}$ for $k>l>0$, which is also valid for $k=0$ if we set ${\Phi}^0={\varphi}$.

When we look at $E^k$ as a submanifold of $\T^EE^{k-1}$ we have that ${\Phi}^k$ is the restriction to $E^k$ of the map $\map{\T^{\Phi}{\Phi}^{k-1}}{\T^EE^{k-1}}{\T^{E'}E'^{k-1}}$ given by $(A,b,V)\mapsto({\Phi}^{k-1}(A),{\Phi}(b),T{\Phi}(V))$. In other words, we have $\T^{\Phi}{\Phi}^k{\circ}i_{k,1}=i_{k,1}{\circ}{\Phi}^{k+1}$, which in terms of the operator $\Ti$ reads $\T^{\Phi}{\Phi}^k{\circ}\Ti=\Ti{\circ}{\Phi}^{k+1}$


\section{Variational vector fields and complete lifts}
\label{variational.vector.fields}

As it was explained above, a vector tangent to the $k$-tangent space to a manifold is determined by a 1-parameter family of curves. Accordingly, a vector tangent to $E^k$ is determined by a 1-parameter family ${\alpha}(s,t)$ of admissible curves in $E$, by the same procedure: $[s\mapsto[t\mapsto {\alpha}(s,t)]^{k-1}]^1$ is a vector tangent to $E^k$ at the point $[t\mapsto {\alpha}(0,t)]^{k-1}{\in}E^k$. 

In the calculus of variations on Lie algebroids the families ${\alpha}(s,t)$ of admissible curves are given by morphisms of Lie algebroids $\map{{\phi}}{T{\Real}^2}{E}$ of the form ${\alpha}(s,t)dt+{\beta}(s,t)ds$. The variational vector field $\frac{d}{ds}{\alpha}_s{}^k(t)\big|_{s=0}$ can be determined in terms of ${\sigma}(t)={\beta}(0,t)$ and its derivatives up to order $k$. It is a vector field along $a^k(t)$, where $a(t)={\alpha}(0,t)$, and will be denoted ${\Xi}_a^k{\sigma}(t)$.

Indeed, let us find the local expression of the variational vector field associated to the morphism ${\alpha}dt+{\beta}ds$. In local coordinates, the family ${\alpha}$ is $({\gamma}^i(s,t),{\alpha}^{\mu}(s,t))$ and the family ${\beta}$ is $({\gamma}^i(s,t),{\beta}^{\mu}(s,t))$. The fact that ${\alpha}dt+{\beta}ds$ is a morphism amounts to 
\begin{equation}\label{morphism.local}
\left\{\begin{aligned}
\pd{{\gamma}^i}{t}&={\rho}^i_{\mu}{\alpha}^{\mu}\\
\pd{{\gamma}^i}{s}&={\rho}^i_{\mu}{\beta}^{\mu}\\
\pd{{\alpha}^{\mu}}{s}&=\pd{{\beta}^{\mu}}{t}+C^{\mu}_{{\nu}{\gamma}}{\alpha}^{\nu}{\beta}^{\gamma},
\end{aligned}\right.
\end{equation}
where the local structure functions ${\rho}^i_{\mu}$ and $C^{\mu}_{{\nu}{\gamma}}$ are evaluated at the point ${\gamma}(s,t)$. The curve $a_s{}^k(t)$ in $E^k$ defined by the family ${\alpha}$ is given by
\[
x^i={\gamma}^i(s,t),\ y^{\mu}_1={\alpha}^{\mu}(s,t),\ y^{\mu}_2=\pd{{\alpha}^{\mu}}{t}(s,t),\ldots,\ y^{\mu}_k=\pd{^{k-1}{\alpha}^{\mu}}{t^{k-1}}(s,t),
\]
and the coordinates of ${\Xi}_a^k{\sigma}(t):=\frac{d}{ds}a_s{}^k(t)\at{s=0}$ are
\[
w^i=\pd{{\gamma}^i}{s}(0,t),\ v^{\mu}_1=\pd{{\alpha}^{\mu}}{s}(0,t),\ldots,\ v^{\mu}_k=\pd{^k{\alpha}^{\mu}}{t^{k-1}\partial s}(0,t).
\]
Taking into account the equations~\eqref{morphism.local} we have
\[
w^i={\rho}^i_{\mu}{\beta}^{\mu}(0,t)={\rho}^i_{\mu}{\sigma}^{\mu}(t)
\]
and
\[
v_1^{\mu}=\pd{{\beta}^{\mu}}{t}(0,t)+C^{\mu}_{{\nu}{\gamma}}{\alpha}^{\nu}(0,t){\beta}^{\gamma}(0,t)=\dot{{\sigma}}^{\mu}(t)+C^{\mu}_{{\nu}{\gamma}}a^{\nu}(t){\sigma}^{\gamma}(t),
\]
and hence we have 
\[
v_r^{\mu}=\frac{d^{r-1}v_1^{\mu}}{dt^{r-1}}(t)=\frac{d^{r-1}}{dt^{r-1}}[\dot{{\sigma}}^{\mu}+C^{\mu}_{{\nu}{\gamma}}a^{\nu}{\sigma}^{\gamma}],\qquad r=2,\ldots,k.
\]
It conclusion, we have
\begin{equation}\label{variational.vector.coordinates}
{\Xi}_a^k{\sigma}={\rho}^i_{\alpha}{\sigma}^{\alpha}\pd{}{x^i}+\sum_{r=1}^{k}\frac{d^{r-1}}{dt^{r-1}}[\dot{{\sigma}}^{\alpha}+C^{\alpha}_{{\beta}{\gamma}}a^{\beta}{\sigma}^{\gamma}]\pd{}{y^{\alpha}_r}.
\end{equation}

It follows that ${\Xi}_a^k{\sigma}$ is a differential operator in ${\sigma}$ of order $k$ (depends on $[{\sigma}]^{(k)}$) and a differential operator in $a$ of order $k-1$ (depends on ${a}^k(t)=[h\mapsto a(t+h)]^{k-1}$).

\begin{remark}
In the classical notation of the calculus of variations we have 
\[
{\delta}x^i={\rho}^i_{\alpha}{\sigma}^{\alpha},
\quad
{\delta}y^{\alpha}_1=\dot{{\sigma}}^{\alpha}+C^{\alpha}_{{\beta}{\gamma}}a^{\beta}{\sigma}^{\gamma},
\quad
{\delta}y^{\alpha}_{r}=\frac{d}{dt}{\delta}y^{\alpha}_{r-1},\quad\text{for $r=2,\ldots,k$,}
\]
as it should be expected.
\end{remark}

The variational vector field ${\Xi}_a^k{\sigma}$ can alternatively be defined directly in terms of ${\sigma}$ and $a$, without reference to the morphism ${\phi}$, by jet-prolongation as follows.

\begin{proposition}
Let $a$ be an admissible curve and let ${\sigma}$ be a section of $E$ along ${\tau}{\circ}a$. Consider the associated first order variational vector field $\map{{\Xi}_a{\sigma}}{{\Real}}{TE}$, and its $(k-1)$-jet prolongation $({\Xi}_a{\sigma})^{(k-1)}(t){\in}T^{k-1}TE$. Then, the vector field ${\chi}_{k-1}{\circ}({\Xi}_a{\sigma})^{(k-1)}$ is tangent to $E^k$ and coincides with ${\Xi}_a^k{\sigma}$.
\end{proposition}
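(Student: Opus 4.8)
The plan is to establish the identity by a direct computation in local coordinates, comparing the coordinate expression of $\chi_{k-1}\circ(\Xi_a\sigma)^{(k-1)}$ with the already-derived formula~\eqref{variational.vector.coordinates} for $\Xi_a^k\sigma$. First I would recall from~\eqref{first-order.variational.vector.field} that the first-order variational vector field has components $\delta x^i = \rho^i_\alpha \sigma^\alpha$ and $\delta y^\alpha = \dot\sigma^\alpha + C^\alpha_{\beta\mu}a^\beta\sigma^\mu$ along the curve $t\mapsto a(t)$ in $E$. Thus $\Xi_a\sigma$ is, concretely, the curve $t\mapsto\bigl(\gamma^i(t),a^\alpha(t);\rho^i_\alpha\sigma^\alpha(t),\dot\sigma^\alpha(t)+C^\alpha_{\beta\mu}a^\beta\sigma^\mu(t)\bigr)$ in $TE$, where on $TE$ we use the coordinates $(x^i,y^\alpha,\dot x^i,\dot y^\alpha)$ induced from $(x^i,y^\alpha)$ on $E$.

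Next I would compute the $(k-1)$-jet prolongation $(\Xi_a\sigma)^{(k-1)}$: by definition of the prolongation to $T^{k-1}(TE)$ it records the successive $t$-derivatives of all four blocks of components, so in the induced coordinates $(x^i_{(j)},y^\alpha_{(j)},\dot x^i_{(j)},\dot y^\alpha_{(j)})$, $j=0,\dots,k-1$, on $T^{k-1}(TE)$ one has $\dot y^\alpha_{(j)} = \tfrac{d^j}{dt^j}[\dot\sigma^\alpha + C^\alpha_{\beta\mu}a^\beta\sigma^\mu]$ and similarly for the other blocks. Then I would apply the canonical flip $\chi_{k-1}\colon T^{k-1}TE\to TT^{k-1}E$, which in coordinates simply exchanges the roles of the ``jet direction'' and the ``tangent direction'': the component that was the $j$-th time derivative of the $\dot y$-block becomes the tangent component in the $y_{(j)}$-direction, etc. This turns the data into a vector tangent to $T^{k-1}E$ whose components along $\partial/\partial x^i_{(0)}$ and $\partial/\partial y^\alpha_{(j)}$, $j=0,\dots,k-1$, are exactly $\rho^i_\alpha\sigma^\alpha$ and $\tfrac{d^j}{dt^j}[\dot\sigma^\alpha + C^\alpha_{\beta\mu}a^\beta\sigma^\mu]$ — and, crucially, the components along $\partial/\partial x^i_{(j)}$ for $j\ge 1$ are the matching total derivatives of $\rho^i_\alpha\sigma^\alpha$, which are precisely the derivatives of the constraint $\dot x^i=\rho^i_\alpha y^\alpha$ differentiated, so the resulting vector is tangent to the submanifold $E^k\subset T^{k-1}E$. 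Restricting to the coordinates $(x^i,y^\alpha_r)$ adapted to $E^k$ (with $y^\alpha_r = y^\alpha_{(r-1)}$, $r=1,\dots,k$) and reindexing $j=r-1$, this is term-by-term~\eqref{variational.vector.coordinates}, so $\chi_{k-1}\circ(\Xi_a\sigma)^{(k-1)}=\Xi_a^k\sigma$.

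The main obstacle I expect is the bookkeeping needed to verify tangency to $E^k$ and the exact matching of indices after the flip: one must check that the $x$-block components produced by $\chi_{k-1}$ (the higher derivatives $\tfrac{d^j}{dt^j}(\rho^i_\alpha\sigma^\alpha)$) agree with what differentiating the defining relations~\eqref{local.coordinates} of $E^k\subset T^{k-1}E$ forces, rather than being extra independent data. This amounts to observing that along an $E$-path the admissibility relation and all its total derivatives hold identically, so differentiating them with respect to the variation parameter $s$ at $s=0$ produces exactly the linear relations that cut out $T(E^k)$ inside $T(T^{k-1}E)$; hence the flipped vector automatically lies in $T(E^k)$. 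An alternative, more conceptual route — which I would mention but not carry out in detail — is to use the commutation of $\chi_{k-1}$ with jet prolongation of maps together with the fact that $\Xi_a\sigma$ itself arises from the first-order morphism $\alpha\,dt+\beta\,ds$ via the canonical involution of $\T^EE$ (as noted at the end of Section~\ref{preliminaries}); prolonging that picture $k-1$ times and using naturality of $\chi$ recovers the family $a_s{}^k$ whose $s$-derivative is $\Xi_a^k\sigma$ by construction. Either way the content is the compatibility of the flip with taking derivatives, and the coordinate computation is the cleanest way to make it airtight.
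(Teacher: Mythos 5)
Your route is genuinely different from the paper's. The paper's proof is a three-line conceptual argument: it takes the $E$-homotopy ${\alpha}\,dt+{\beta}\,ds$ with ${\alpha}(0,t)=a(t)$, ${\beta}(0,t)={\sigma}(t)$, notes that $[s\mapsto[t\mapsto{\alpha}(s,t)]^{k-1}]^1$ lies in $TE^k$ because every ${\alpha}_s$ is admissible, and then applies the defining identity ${\chi}_{k-1}([t\mapsto[s\mapsto{\alpha}(s,t)]^1]^{k-1})=[s\mapsto[t\mapsto{\alpha}(s,t)]^{k-1}]^1$ together with $\pd{{\alpha}}{s}(0,t)={\Xi}_a{\sigma}(t)$; no coordinates appear. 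Your primary route — flip the coordinate expression of $({\Xi}_a{\sigma})^{(k-1)}$ and match term by term with~\eqref{variational.vector.coordinates} — also works and has the virtue of making the index bookkeeping explicit, but it is longer and its tangency step is soft as you have phrased it: saying that one "differentiates the admissibility relations with respect to the variation parameter $s$" presupposes that the flipped vector is the $s$-derivative of a family of admissible curves, which is exactly the morphism construction you were trying to avoid; invoking it collapses your argument into the paper's. To keep the computation self-contained you must instead verify the linearized constraints of $E^k{\subset}T^{k-1}E$ directly on the components $\frac{d^j}{dt^j}({\rho}^i_{\alpha}{\sigma}^{\alpha})$ and $\frac{d^j}{dt^j}[\dot{{\sigma}}^{\alpha}+C^{\alpha}_{{\beta}{\mu}}a^{\beta}{\sigma}^{\mu}]$, and already at the first level ($j=1$) this is not automatic: it reduces to the compatibility condition between anchor and bracket,
\begin{equation*}
{\rho}^l_{\beta}\pd{{\rho}^i_{\mu}}{x^l}-{\rho}^l_{\mu}\pd{{\rho}^i_{\beta}}{x^l}={\rho}^i_{\gamma}C^{\gamma}_{{\beta}{\mu}},
\end{equation*}
so the Lie algebroid structure equations enter essentially, not just "bookkeeping". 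Your sketched alternative (naturality of ${\chi}$ applied to the morphism picture) is in substance the paper's proof. In short: correct outline, different and more computational method, but the tangency verification needs to be carried out via the structure equations rather than asserted.
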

\begin{proof}
It is enough to prove the proposition for $t=0$; for $t=t_0{\neq}0$ we can just consider the curves $\bar{a}(t)=a(t_0+t)$ and $\bar{{\sigma}}(t)={\sigma}(t_0+t)$.

We take ${\alpha}(s,t)$ a family of admissible curves such that ${\alpha}(0,t)=a(t)$ and find a complementary ${\beta}(s,t)$ such that $\map{{\alpha}dt+{\beta}ds}{T{\Real}^2}{E}$ is a morphism of Lie algebroids with ${\beta}(0,t)={\sigma}(t)$. On one hand, by construction, we have that $[s\mapsto [t\mapsto {\alpha}(s,t)]^{k-1}]^1$ takes values in $TE^k$. On the other hand, $\pd{{\alpha}}{s}(0,t)={\Xi}_a{\sigma}(t)$, from where
\begin{align*}
\frac{d}{ds}{\alpha}_s{}^k(0)\at{s=0}
&=[s\mapsto [t\mapsto {\alpha}(s,t)]^{k-1}]^1\\
&={\chi}_{k-1}([t\mapsto [s\mapsto {\alpha}(s,t)]^1]^{k-1})\\
&={\chi}_{k-1}([t\mapsto {\Xi}_a{\sigma}(t)]^{k-1})\\
&=\{{\chi}_{k-1}{\circ}({\Xi}_a{\sigma})^{(k-1)}\}(0),
\end{align*}
and the result follows.
\end{proof}

As an immediate consequence of the above proposition we have
\begin{equation}
T{\tau}_{k,l}{\circ}{\Xi}_a^k{\sigma}={\Xi}^l_a{\sigma},\qquad\qquad\text{for $k>l>0$.}
\end{equation}
It follows that ${\Xi}^k_a{\sigma}$ projects to the vector field ${\rho}({\sigma})$, i.e.\ $T{\tau}_{k,0}{\circ}{\Xi}_a^k{\sigma}={\rho}({\sigma})$, and hence it make sense the following definition.
\begin{definition}
Let $a$ be an admissible curve and ${\sigma}$ be a section along ${\gamma}={\tau}{\circ}a$. The section ${\sigma}_a^k$ of $\T^EE^k$ along $a^k$ given by 
\[
{\sigma}_a^k(t)=\bigl(a^k(t),{\sigma}(t),{\Xi}_a^k{\sigma}(t)\bigr)
\]
is said to be the $k$th-order complete lift of ${\sigma}$ with respect to $a$.
\end{definition}

It is easy to see that every element in $\T^EE^k$ is of the form ${\sigma}^k_a(0)$ for some section~${\sigma}$. 

\medskip

\paragraph{Complete lift of a section of $E$}
The concept of variational vector field is related to the concept of complete lift of a section of $E$. 

Let ${\eta}$ be  a section of $E$ and $({\Phi}_s,{\varphi}_s)$ its flow~\cite{Rui,VCLA}. The map $\map{{\Phi}_s{^k}}{E^k}{E^k}$ is a flow in $E^k$ which defines a vector field $X_{\eta}^k{\in}TE^k$. This vector field is ${\tau}_{k,l}$-projectable over $X^l_{\eta}$ for $k>l>0$, and ${\tau}_{k,0}$-projectable over ${\rho}({\eta})$. Therefore it allows to define a section ${\eta}^k{\in}\sec{\T^EE^k}$ by 
\[
{\eta}^k(A)=(A,{\eta}({\tau}_{k,0}(A)),X^k_{\eta}(A)),\qquad A{\in}E^k.
\]
The section ${\eta}^k$ will be called the $k$th-order \emph{complete lift} of the section ${\eta}$. For $k=1$ the section ${\eta}^1$ coincides with the complete lift ${\eta}^c$ of the section ${\eta}$ as defined in~\cite{LMLA}. From the definition it is clear that $\T{\tau}_{k,l}{\circ}{\eta}^k={\eta}^l{\circ}{\tau}_{k,l}$, for $k>l{\geq}0$, where ${\eta}^0={\eta}$ should be understood for $l=0$.

\begin{proposition}
For any section ${\eta}{\in}\sec{E}$ we have the following property
\begin{equation}\label{complete.lift.commutes.with.T}
d_{\Ti}{\circ}i_{{\eta}^k}=i_{{\eta}^{k+1}}{\circ}d_{\Ti}.
\end{equation}
Conversely, if $Z$ is a section of $\T^EE^{k+1}$ which projects to ${\eta}{\in}\sec{E}$ and satisfies $d_{\Ti}{\circ}i_{{\eta}^k}=i_Z{\circ}d_{\Ti}$, then $Z={\eta}^{k+1}$.
\end{proposition}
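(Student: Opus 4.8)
The plan is to deduce both claims from Cartan-calculus identities combined with the equivariance of the canonical immersion $\Ti$ under the prolongation of the flow of $\eta$.

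For the first identity I would start by transforming the difference of the two operators. Since $\T\tau_{k+1,k}\circ\eta^{k+1}=\eta^k\circ\tau_{k+1,k}$, skew-symmetry of forms immediately gives, for every form $\omega$ on $E^k$, that $i_{\eta^{k+1}}(i_{\Ti}\omega)=-i_{\Ti}(i_{\eta^k}\omega)$. Now expand $d_{\Ti}\circ i_{\eta^k}$ and $i_{\eta^{k+1}}\circ d_{\Ti}$ using $d_{\Ti}=d\circ i_{\Ti}+i_{\Ti}\circ d$ and the definitions of $d_{\eta^k}$ on $E^k$ and $d_{\eta^{k+1}}$ on $E^{k+1}$ as $i_{(\cdot)}\circ d+d\circ i_{(\cdot)}$, substitute $i_{\eta^{k+1}}\circ i_{\Ti}=-i_{\Ti}\circ i_{\eta^k}$, and recognize $d\circ i_{\eta^k}+i_{\eta^k}\circ d=d_{\eta^k}$; the terms containing $d\circ i_{\Ti}\circ i_{\eta^k}$ cancel and one is left with the operator identity
\[
d_{\Ti}\circ i_{\eta^k}-i_{\eta^{k+1}}\circ d_{\Ti}=i_{\Ti}\circ d_{\eta^k}-d_{\eta^{k+1}}\circ i_{\Ti}.
\]
Hence it is enough to prove $i_{\Ti}\circ d_{\eta^k}=d_{\eta^{k+1}}\circ i_{\Ti}$. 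For this I would use the flow. Let $(\Phi_s,\varphi_s)$ be the flow of $\eta$. Since the prolongation of a composition is the composition of the prolongations, $\Phi^k_s\circ a^k=(\Phi_s\circ a)^k$ for every admissible curve $a$, and inserting this into the defining formula \eqref{Ti} for $\Ti$ yields the equivariance $\T\Phi^k_s\circ\Ti=\Ti\circ\Phi^{k+1}_s$, together with $\Phi^k_s\circ\tau_{k+1,k}=\tau_{k+1,k}\circ\Phi^{k+1}_s$. From these compatibilities one obtains $i_{\Ti}\circ(\Phi^k_s)^\star=(\Phi^{k+1}_s)^\star\circ i_{\Ti}$ as operators from forms on $E^k$ to forms on $E^{k+1}$. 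Differentiating this equality at $s=0$ and recalling that $d_{\eta^k}\omega=\frac{d}{ds}(\Phi^k_s)^\star\omega\at{s=0}$ (and likewise on $E^{k+1}$) gives exactly $i_{\Ti}\circ d_{\eta^k}=d_{\eta^{k+1}}\circ i_{\Ti}$, which finishes the first part.

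For the converse, assume $Z$ projects to $\eta$ and $d_{\Ti}\circ i_{\eta^k}=i_Z\circ d_{\Ti}$. By the first part $i_Z\circ d_{\Ti}=i_{\eta^{k+1}}\circ d_{\Ti}$, so $W:=Z-\eta^{k+1}$ satisfies $i_W\circ d_{\Ti}=0$; moreover, since both $Z$ and $\eta^{k+1}$ project to $\eta$, the section $W$ has zero $E$-component, hence $i_W\X^\alpha=0$ for all $\alpha$. As $i_W$ is $\cinfty{E^{k+1}}$-linear on forms, it suffices to check $i_W\lambda=0$ on a set of $1$-forms spanning the $1$-forms on $E^{k+1}$. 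By \eqref{dT.of.the.basis}, $d_{\Ti}\X^\alpha=\V^\alpha_1-C^\alpha_{\beta\gamma}y^\beta_1\X^\gamma$ and $d_{\Ti}\V^\alpha_r=\V^\alpha_{r+1}$ for $1\le r\le k$, so the forms $\X^\alpha$ together with $d_{\Ti}\X^\alpha,d_{\Ti}\V^\alpha_1,\dots,d_{\Ti}\V^\alpha_k$ span the full dual basis $\{\X^\alpha,\V^\alpha_1,\dots,\V^\alpha_{k+1}\}$ of $\T^EE^{k+1}$. Since $i_W$ annihilates each of them, $i_W$ vanishes on all $1$-forms, hence $W=0$ and $Z=\eta^{k+1}$.

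The main obstacle is the passage in the second paragraph from the finite equivariance of $\Ti$ to the infinitesimal identity $i_{\Ti}\circ d_{\eta^k}=d_{\eta^{k+1}}\circ i_{\Ti}$: one must make sure that $\Phi^k_s$ really is (the restriction of) the jet prolongation of $\Phi_s$, that $(\Phi^k_s)^\star$ acts on $E$-forms with the stated $s$-derivative, and that contraction with the section $\Ti$ along $\tau_{k+1,k}$ commutes with pullback along the compatible triple $(\Phi_s,\Phi^k_s,\Phi^{k+1}_s)$. A more computational alternative would be to prove $i_{\Ti}\circ d_{\eta^k}=d_{\eta^{k+1}}\circ i_{\Ti}$ directly in the coordinates of Section~\ref{jets.of.paths}, using that along an admissible curve $a$ with base curve $\gamma$ the section $\eta^k\circ a^k$ equals the $k$th-order complete lift (with respect to $a$) of $\eta\circ\gamma$, whose components are read off from \eqref{variational.vector.coordinates}; this reduces the statement to a finite list of identities among the structure functions and their $d_{\Ti}$-derivatives.
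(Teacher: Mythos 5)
Your algebraic skeleton is sound: the anticommutation $i_{\eta^{k+1}}\circ i_{\Ti}=-i_{\Ti}\circ i_{\eta^k}$ (which follows from $\T\tau_{k+1,k}\circ\eta^{k+1}=\eta^k\circ\tau_{k+1,k}$) does reduce the identity to $i_{\Ti}\circ d_{\eta^k}=d_{\eta^{k+1}}\circ i_{\Ti}$, and your converse is correct --- $i_W\circ d_{\Ti}=0$ together with $i_W\X^\alpha=0$ kills $W$ on the spanning set $\{\X^\alpha,\,d_{\Ti}\X^\alpha,\,d_{\Ti}\V^\alpha_1,\dots,d_{\Ti}\V^\alpha_k\}$, which is just a coordinate version of the paper's argument with $\lambda=d^{r-1}_{\Ti}\theta$. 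The gap is exactly where you suspect it is, and it is not a mere verification: your proof of the key lemma rests on the identity $\frac{d}{ds}\bigl(\T^{\Phi_s}\Phi^k_s\bigr)^\star\lambda\at{s=0}=d_{\eta^k}\lambda$ for arbitrary sections $\lambda$ of $(\T^EE^k)^*$, where $d_{\eta^k}$ is defined by Cartan's formula $i_{\eta^k}\circ d+d\circ i_{\eta^k}$. This amounts to asserting that $\T^{\Phi_s}\Phi^k_s$ is the flow, in the Lie algebroid sense, of the section $\eta^k$ of $\T^EE^k$; equivalently that $\frac{d}{ds}(\T^{\Phi_s}\Phi^k_s)^{-1}_*Z\at{s=0}=[\eta^k,Z]$ for every section $Z$. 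The definition of $\eta^k$ in the paper only pins down its anchor $X^k_\eta$ (whose flow is $\Phi^k_s$) and its projection $\eta$; identifying the bracket/Lie-derivative action of $\eta^k$ with the $s$-derivative of this pullback is a statement of essentially the same depth as the proposition you are proving, and on a general (possibly non-integrable) Lie algebroid it does not come for free. So as written the central step is circular-adjacent: it would itself have to be checked on a generating family of forms.

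That is precisely how the paper avoids the issue. It evaluates both sides of \eqref{complete.lift.commutes.with.T} only on the generators $\theta\in\sec{E^*}$ and $d^r_{\Ti}(dF)$, $F\in\cinfty{E}$. For basic $\theta$ the identity is a short coordinate computation, and for $d^r_{\Ti}(dF)$ everything reduces to functions, where the only flow fact needed is $d_{\eta^r}G=\frac{d}{ds}\,G\circ\Phi^r_s\at{s=0}$ --- immediate from $\rho(\eta^r)=X^r_\eta$ --- plus the equality of mixed partial derivatives in $s$ and $t$ applied to $F\circ\Phi_s\circ a$. To repair your argument you would either have to prove the flow characterization of $d_{\eta^k}$ by exactly this kind of generator-by-generator check (at which point your Cartan-calculus reduction buys little), or fall back on the coordinate alternative you sketch at the end, which is workable but is again essentially the paper's computation in local form.
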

\begin{proof}
It is enough to prove the proposition over basic forms ${\theta}{\in}\sec{E^*}$, and over forms of the type $d^r(dF)$ for $F{\in}\cinfty{E}$ and $r=0,\ldots,k-1$ (we have omitted the pullbacks for simplicity), since they generate the set of sections of $(\T^EE^k)^*$.

For ${\theta}{\in}\sec{E^*}$, in local coordinates if ${\theta}={\theta}_{\alpha}e^{\alpha}$ and ${\eta}={\eta}^{\alpha}e_{\alpha}$, we have 
\[
d_{\Ti}{\theta}=(d_{\Ti}{\theta}_{\alpha})\X^{\alpha}+{\theta}_{\alpha}(\V^{\alpha}-C^{\alpha}_{{\beta}{\gamma}}y^{\beta}\X^{\gamma})
\]
and
\[
{\eta}^1={\eta}^{\alpha}\X_{\alpha}+(d_{\Ti}{\eta}^{\alpha}+C^{\alpha}_{{\beta}{\gamma}}y^{\beta}{\eta}^{\gamma})\V_{\alpha},
\]
from where 
\[
\pai{d_{\Ti}{\theta}}{{\eta}^1}=(d_{\Ti{\theta}_{\alpha}}){\eta}^{\alpha}+{\theta}_{\alpha}(d_{\Ti}{\eta}^{\alpha})=d_{\Ti}\pai{{\theta}}{{\eta}}.
\]

We will prove that for any function $F{\in}\cinfty{E}$ and $r=0,\ldots,k$ we have 
\[
d_{\Ti}\pai{d^r_{\Ti}dF}{{\eta}^{r+1}}=\pai{d^{r+1}_{\Ti}dF}{{\eta}^{r+2}},
\]
which can be equivalently stated in the form 
\[
d_{\Ti}d_{{\eta}^{r+1}}d^r_{\Ti}F=d_{{\eta}^{r+2}}d^{r+1}_{\Ti}F.
\]
Consider an arbitrary point $A=[a]^{r+1}{\in}E^{r+2}$ and the flow ${\Phi}_s$ of the section ${\eta}$. Then the left hand side evaluated at $A$ is equal to
\begin{align*}
(d_{\Ti}d_{{\eta}^{r+1}}d^r_{\Ti}F)(A)
&=\frac{d}{dt}[(d_{{\eta}^{r+1}}d^r_{\Ti}F){\circ}a^{r+1}](0)\\
&=\frac{d}{dt}\frac{d}{ds}[d^r_{\Ti}F{\circ}{\Phi}^{r+1}_s{\circ}a^{r+1}](t)\at{s=0}\at{t=0}\\
&=\frac{d}{dt}\frac{d}{ds}[d^r_{\Ti}F{\circ}({\Phi}_s{\circ}a)^{r+1}](t)\at{s=0}\at{t=0}\\
&=\frac{d}{dt}\frac{d}{ds}\Bigl[\frac{d^r}{dt^r}(F{\circ}{\Phi}_s{\circ}a)(t)\Bigr]\at{t=0}\at{s=0}\\
&=\frac{d}{ds}\frac{d^{r+1}}{dt^{r+1}}(F{\circ}{\Phi}_s{\circ}a)(t)\at{t=0}\at{s=0}
\end{align*}
and the right hand side evaluated at $A$ is
\begin{align*}
(d_{{\eta}^{r+2}}d^{r+1}_{\Ti}F)(A)
&=\frac{d}{ds}(d^{r+1}_{\Ti}F{\circ}{\Phi}^{r+2}_s)(A)\at{s=0}\\
&=\frac{d}{ds}(d^{r+1}_{\Ti}F{\circ}{\Phi}^{r+2}_s{\circ}a^{r+2})(0)\at{s=0}\\
&=\frac{d}{ds}(d^{r+1}_{\Ti}F{\circ}({\Phi}_s{\circ}a)^{r+2})(0)\at{s=0}\\
&=\frac{d}{ds}\frac{d^{r+1}}{dt^{r+1}}(F{\circ}{\Phi}_s{\circ}a)(t)\at{t=0}\at{s=0}
\end{align*}
and both expressions are equal.

Conversely, let $Z$ be a section of $\T^EE^{k+1}$ such that $\T{\tau}_{k+1,0}{\circ}Z={\eta}{\circ}{\tau}_{k+1,0}$ and $i_Zd_{\Ti}{\lambda}=d_{\Ti}i_{{\eta}^k}{\lambda}$ for every section ${\lambda}$ of $(\T^EE^k)^*$. In particular for ${\lambda}=d^{r-1}_{\Ti}{\theta}$ with $r=1,\ldots,k$ we have $i_Z(d^r_{\Ti}{\theta})=d_{\Ti}i_{{\eta}^k}d^{r-1}_{\Ti}{\theta}=i_{{\eta}^{k+1}}(d^r_{\Ti}{\theta})$, so that $i_{Z-{\eta}^{k+1}}d^{r}_{\Ti}{\theta}=0$, for $r=1,\ldots,k$. Moreover, since $Z$ projects to ${\eta}$ it follows that this last relation holds also for $r=0$. Therefore $Z-{\eta}^{k+1}=0$.
\end{proof}

Notice that as a consequence of the above property we have that 
\begin{equation}
d_{\Ti}{\circ}d_{{\eta}^k}=d_{{\eta}^{k+1}}{\circ}d_{\Ti},
\end{equation}
which follows from $d_{{\eta}^r}=d{\circ}i_{{\eta}^r}+i_{{\eta}^r}{\circ}d$ and $d{\circ}d_{\Ti}=d_{\Ti}{\circ}d$.

\medskip

\paragraph{Relation of complete lifts with variational vector fields}

The relation between complete lifts of sections and variational vector fields is as follows. 

Let $a$ be an admissible curve over ${\gamma}$. Consider a section ${\eta}$ of $E$ and define the section ${\sigma}$ along ${\gamma}$ by ${\sigma}(t)={\eta}({\gamma}(t))$. Consider on one hand the complete lift ${\sigma}^k_a$ of ${\sigma}$ with respect to $a$, and on the other the complete lift ${\eta}^k$ of the section ${\eta}$. Then 
\begin{equation}\label{relation.variational.complete}
{\sigma}^k_a={\eta}^k{\circ}a^k.
\end{equation}
Indeed, it is shown in~\cite{VCLA} that the section ${\eta}$ and the admissible curve $a$ define a morphism ${\phi}={\alpha}(s,t)dt+{\beta}(s,t)ds={\Phi}_s(a(t))dt+{\eta}({\varphi}_s({\gamma}(t)))ds$. At $s=0$ we have ${\alpha}(0,t)=a(t)$, ${\beta}(0,t)={\eta}({\gamma}(t))={\sigma}(t)$ and the vector field $\frac{d}{ds}a_s^k\at{s=0}(t)={\Xi}^k_a{\sigma}$ is equal to $X^k_{\eta}(a^k(t))$. Therefore 
\[
{\sigma}^k_a(t)=(a^k(t),{\sigma}(t),{\Xi}_a^k{\sigma}(t))=\bigl(a^k(t),{\eta}({\gamma}(t)),X^k_{\eta}(a^k(t)\bigr)={\eta}^k(a^k(t)),
\]
form where ${\sigma}_a^k(t)={\eta}^k(a^k(t))$ follows.

Conversely, given ${\sigma}(t)$ along ${\gamma}(t)$ we can find a time-dependent section ${\eta}$ such that ${\eta}(t,{\gamma}(t))={\sigma}(t)$. Using the construction for the time-dependent section as in~\cite{VCLA} we also have ${\sigma}^k(t)={\eta}^k(t,a^k(t))$.

It follows that if the coordinate expression of ${\eta}$ is ${\eta}={\eta}^{\alpha}e_{\alpha}$ then the coordinate expression of the complete lift is 
\begin{equation}
\label{complete.lift.section.coordinates}
{\eta}^k={\eta}^{\alpha}\X_{\alpha}+\sum_{r=1}^kd^{r-1}_{\Ti}[d_{\Ti}{\eta}^{\alpha}+C^{\alpha}_{{\beta}{\gamma}}y_1^{\beta}{\eta}^{\gamma}]\,\V^r_{\alpha}.
\end{equation}
Simmilarly
\begin{equation}
\label{complete.lift.sections.along.coordinates}
{\sigma}^k_a(t)={\sigma}^{\alpha}(t)\X_{\alpha}(a^k(t))+\sum_{r=1}^k\frac{d^{r-1}}{dt^{r-1}}[\dot{{\sigma}}^{\alpha}+C^{\alpha}_{{\beta}{\gamma}}a^{\beta}{\sigma}^{\gamma}](t)\,\V^r_{\alpha}(a^k(t)).
\end{equation}
where ${\sigma}={\sigma}^{\alpha}(t)e_{\alpha}({\gamma}(t))$.

\section{The vertical endomorphism and the variational operator}
\label{vertical.endomorphism}

Working with forms and their time derivatives, it is useful to have an operator which formally resembles integration, that is, it is a kind of inverse of $d_{\Ti}$. This is the role of the vertical endomorphism, which is an endomorphism of the vector bundle $\T^EE^k$. We introduce the vertical endomorphism as an auxiliary tool to define two differential operators, the variational operator and the Cartan operator, which will be fundamental in the development of the calculus of variations. 


\subsection{Vertical endomorphism}
We will define the vertical endomorphism $S$ by its action on sections of the dual bundle. We will make no notational distinction between $S$ and its dual $S^*$, both will be written $S$. 

The set of sections of $(\T^EE^k)^*$ is generated (with coefficients in $\cinfty{E^k}$) by sections of the form $(\T{\tau}_{k,r+1})^\star d_{\Ti}^{r}\theta$ for sections $\theta{\in}\sec{E^*}$ and $r=0,\ldots,k$. In particular, if  $\{e^{\alpha}\}$ is a local basis of $\sec{E^*}$, then a local basis of $\sec{(\T^EE^k)^*}$ is $\{(\T{\tau}_{k,r+1})^\star d^{r}_{\Ti}e^{\alpha}\}$ for $r=0,\ldots,k$. 
Therefore, it is sufficient to give the action of $S$ over sections of such a form. To simplify the notation, in what follows we will omit the pullbacks

We define the \emph{vertical endomorphism} as the $(1,1)$ tensor field $\map{S}{\T^EE^k}{\T^EE^k}$ which satisfies
\begin{equation}\label{definition.S}
S(d_{\Ti}^r\theta)=r d_{\Ti}^{r-1}{\theta},
\end{equation}
where $S({\theta})=0$ should be understood for $r=0$. Notice also that for $r=1$ we have $S(d_{\Ti}{\theta})={\theta}$.

We have to check the consistency of the above formula when taking linear combinations of sections. Obviously there is no problem when taking sums, so that we have to check consistency when multiplying by functions on the base. If we take $f{\theta}$, with $f{\in}\cinfty{M}$, then
\[
d^r_{\Ti}(f{\theta})
=\sum_{j=0}^r{r\choose j}(d_{\Ti}^{r-j}f)(d_{\Ti}^j{\theta}),
\]
so that 
\begin{align*}
S(d^r_{\Ti}(f{\theta}))
&=\sum_{j=0}^r{r\choose j}(d_{\Ti}^{r-j}f)S(d_{\Ti}^j{\theta}) \\
&=\sum_{j=0}^r{r\choose j}j(d_{\Ti}^{r-j}f)d_{\Ti}^{j-1}{\theta}\\
&=\sum_{j=1}^r r{r-1\choose j-1}(d_{\Ti}^{r-j}f)d_{\Ti}^{j-1}{\theta}\\
&=rd^{r-1}_{\Ti}(f{\theta}),
\end{align*}
consistently with the given definition.

\medskip

A section ${\lambda}$ of $(\T^EE^k)^*$ is said to be \emph{${\tau}_{k,r}$-semibasic} if it vanishes over elements in the kernel of $\T{\tau}_{k,r}$. In other words, ${\lambda}$ is ${\tau}_{k,r}$-semibasic if there exists a map $\map{\bar{{\lambda}}}{\T^EE^k}{(\T^EE^r)^*}$ such that ${\lambda}(Z)=\pai{\bar{{\lambda}}}{\T{\tau}_{k,r}(Z)}$ for every $Z{\in}\T^EE^r$. In particular, ${\lambda}$ is ${\tau}_{k,0}$-semibasic if there exists a map $\map{\bar{{\lambda}}}{\T^EE^k}{E^*}$ such that ${\eta}(Z)=\pai{\bar{{\lambda}}}{\T{\tau}_{k,0}(Z)}$ for every $Z{\in}\T^EE^r$.  It is clear that the space of ${\tau}_{k,l}$-semibasic forms is generated (over $\cinfty{E^k}$) by the family of sections of the form $d_{\Ti}^r{\theta}$ for $r=0,\ldots,l$. 

\begin{proposition}\label{algebraic.properties.S}
A section ${\lambda}\in\sec{(\T^EE^k)^*}$ is ${\tau}_{k,l}$-semibasic if and only if $S^{l+1}({\eta})=0$. The image of $S^{l}$ is the set of ${\tau}_{k,k-l}$-semibasic forms. In particular, we have that $S^{k+1}=0$.
\end{proposition}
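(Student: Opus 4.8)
The plan is to work entirely at the level of the generating sections $d_{\Ti}^r\theta$ for $\theta\in\sec{E^*}$ and $r=0,\dots,k$, since by the discussion preceding the proposition these span $\sec{(\T^EE^k)^*}$ over $\cinfty{E^k}$ and $S$ is $\cinfty{E^k}$-linear (it is a $(1,1)$ tensor). First I would record the fundamental iteration formula following directly from~\eqref{definition.S}: for $0\le j\le r$,
\begin{equation}\label{Sj.iterated}
S^{j}(d_{\Ti}^r\theta)=\frac{r!}{(r-j)!}\,d_{\Ti}^{r-j}\theta,
\end{equation}
with the convention that the right-hand side is $0$ once $r-j<0$ (equivalently, once $j>r$); this is an immediate induction on $j$ using $S(d_{\Ti}^m\theta)=m\,d_{\Ti}^{m-1}\theta$. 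In particular $S^{k+1}$ kills every generator $d_{\Ti}^r\theta$ with $r\le k$, which gives $S^{k+1}=0$ and hence also the ``if'' direction of the semibasic criterion for $l=k$; more generally $S^{l+1}(d_{\Ti}^r\theta)=0$ precisely when $r\le l$.

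Next I would prove the characterization of $\tau_{k,l}$-semibasic forms. For the easy direction: a $\tau_{k,l}$-semibasic section is, by definition, an $\cinfty{E^k}$-combination of the generators $d_{\Ti}^r\theta$ with $r=0,\dots,l$, and by~\eqref{Sj.iterated} each of these is annihilated by $S^{l+1}$, so $S^{l+1}(\lambda)=0$ for any such $\lambda$. For the converse, write an arbitrary $\lambda$ in the local basis as $\lambda=\sum_{r=0}^{k}\mu^{(r)}_{\alpha}\,d_{\Ti}^r e^{\alpha}$ with coefficients $\mu^{(r)}_{\alpha}\in\cinfty{E^k}$, apply $S^{l+1}$ and use~\eqref{Sj.iterated} to get
\begin{equation}\label{Sl.applied}
S^{l+1}(\lambda)=\sum_{r=l+1}^{k}\frac{r!}{(r-l-1)!}\,\mu^{(r)}_{\alpha}\,d_{\Ti}^{r-l-1}e^{\alpha}.
\end{equation}
The sections $\{d_{\Ti}^{s}e^{\alpha}\}_{s=0,\dots,k-l-1}$ appearing on the right are part of the local basis of $\sec{(\T^EE^k)^*}$ and are therefore $\cinfty{E^k}$-linearly independent; hence $S^{l+1}(\lambda)=0$ forces $\mu^{(r)}_{\alpha}=0$ for all $r\ge l+1$, i.e. $\lambda=\sum_{r=0}^{l}\mu^{(r)}_{\alpha}\,d_{\Ti}^r e^{\alpha}$, which is manifestly $\tau_{k,l}$-semibasic.

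Finally, for the statement about the image of $S^{l}$: applying $S^{l}$ to the generic $\lambda$ and using~\eqref{Sj.iterated} gives $S^{l}(\lambda)=\sum_{r=l}^{k}\frac{r!}{(r-l)!}\mu^{(r)}_{\alpha}\,d_{\Ti}^{r-l}e^{\alpha}$, an $\cinfty{E^k}$-combination of $d_{\Ti}^{s}e^{\alpha}$ with $s=0,\dots,k-l$, hence $\tau_{k,k-l}$-semibasic; conversely any $\tau_{k,k-l}$-semibasic form $\eta=\sum_{s=0}^{k-l}\nu^{(s)}_{\alpha}d_{\Ti}^s e^{\alpha}$ is the image under $S^{l}$ of $\sum_{s=0}^{k-l}\frac{s!}{(s+l)!}\nu^{(s)}_{\alpha}\,d_{\Ti}^{s+l}e^{\alpha}$, so $S^{l}$ is onto the $\tau_{k,k-l}$-semibasic forms; and $S^{k+1}=0$ is the case $l=k+1$ of the image statement (or simply~\eqref{Sj.iterated} with $j=k+1$). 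The only point requiring a little care — and the one I would flag as the main obstacle — is the consistency/well-definedness of $S$ under multiplication of $\theta$ by functions on the base, so that these basis computations are legitimate; but this is exactly the Leibniz-rule check already carried out in the paragraph following~\eqref{definition.S}, which I would simply invoke. Everything else is bookkeeping with the single identity~\eqref{Sj.iterated} and the linear independence of the chosen local basis.
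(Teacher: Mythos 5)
Your proposal is correct and follows essentially the same route as the paper's proof: establish the iterated formula for $S^{j}(d_{\Ti}^r\theta)$ on the generating sections and read off all three claims from it, using that the $d_{\Ti}^{r}e^{\alpha}$ form a local basis of $\sec{(\T^EE^k)^*}$. You supply slightly more detail than the paper (the explicit preimage for surjectivity onto the $\tau_{k,k-l}$-semibasic forms, which the paper leaves as ``a simple argument''), and your coefficient $\frac{r!}{(r-j)!}$ is the correct one where the paper's displayed formula has the factorials transposed.
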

\begin{proof}
For $r=0,\ldots,k$, a simple calculation shows that 
\[
S^p(d^r_{\Ti}{\theta})
=\begin{cases}
0&\text{if $p>r$,}\\
\frac{p!}{(p-r)!}d^{r-p}_{\Ti}{\theta}&\text{if $p{\leq}r$.}
\end{cases}
\]
It follows that $S^p({\lambda})=0$ if and only ${\lambda}$ can be written as a linear combination (with coefficients in $\cinfty{E^k}$) of forms of the type $d^r_{\Ti}{\theta}$ with $r=0,\ldots,p-1$. Therefore $S^p({\lambda})=0$ if and only if ${\lambda}$ is ${\tau}_{k,p-1}$-semibasic. Moreover, from this expression we have that the image of $S^p$ are ${\tau}_{k,k-p}$-semibasic forms. A simple argument shows that every ${\tau}_{k,k-p}$-semibasic form is in the image of $S^p$.
\end{proof}

\begin{proposition}\label{S.dT-dT.S=1}
For any section ${\lambda}{\in}\sec{(\T^EE^{k-1})^*}$ we have 
\begin{equation}
S(d_{\Ti}{\lambda})-d_{\Ti}(S{\lambda})={\lambda}.
\end{equation}
\end{proposition}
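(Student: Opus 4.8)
The plan is to reduce the identity to the generating sections of $\sec{(\T^EE^{k-1})^*}$, namely the forms $d_{\Ti}^r\theta$ with $\theta{\in}\sec{E^*}$ and $r=0,\ldots,k-1$ (pullbacks along the projections omitted, as elsewhere), which span the module over $\cinfty{E^{k-1}}$ by the discussion preceding Proposition~\ref{algebraic.properties.S}. To legitimise this reduction, the key observation is that the operator $P({\lambda})=S(d_{\Ti}{\lambda})-d_{\Ti}(S{\lambda})$, a priori only $\Real$-linear because $d_{\Ti}$ is a derivation and not a tensor, is in fact $\cinfty{E^{k-1}}$-linear, and the statement to prove is exactly $P=\id$.

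First I would check the $\cinfty{}$-linearity of $P$. For $f$ a function and ${\lambda}$ a section, using that $d_{\Ti}$ is a degree-zero derivation one has $d_{\Ti}(f{\lambda})=(d_{\Ti}f){\lambda}+f\,d_{\Ti}{\lambda}$, while $S$, being a $(1,1)$-tensor field on $\T^EE^k$, is $\cinfty{E^k}$-linear, so $S\bigl((d_{\Ti}f){\lambda}\bigr)=(d_{\Ti}f)\,S{\lambda}$ and $S(f{\lambda})=f\,S{\lambda}$. Expanding,
\[
P(f{\lambda})=(d_{\Ti}f)\,S{\lambda}+f\,S(d_{\Ti}{\lambda})-(d_{\Ti}f)\,S{\lambda}-f\,d_{\Ti}(S{\lambda})=f\,P({\lambda}),
\]
so the two unwanted terms cancel and $P$ is tensorial. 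Hence it is enough to verify $P=\id$ on the generators.

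Next I would compute $P$ on ${\lambda}=d_{\Ti}^r\theta$ for $0{\leq}r{\leq}k-1$. Since $r+1{\leq}k$, the form $d_{\Ti}{\lambda}=d_{\Ti}^{r+1}\theta$ is a genuine section of $(\T^EE^k)^*$, so by the defining relation~\eqref{definition.S} we get $S(d_{\Ti}{\lambda})=(r+1)\,d_{\Ti}^r\theta$; on the other hand $S{\lambda}=r\,d_{\Ti}^{r-1}\theta$ and therefore $d_{\Ti}(S{\lambda})=r\,d_{\Ti}^r\theta$. Subtracting yields $P({\lambda})=\bigl((r+1)-r\bigr)d_{\Ti}^r\theta=d_{\Ti}^r\theta={\lambda}$; the boundary cases $r=0$ (where $S\theta=0$ and $S(d_{\Ti}\theta)=\theta$) and $r=k-1$ (where $d_{\Ti}{\lambda}=d_{\Ti}^k\theta$ still lies in the domain of $S$ on $\T^EE^k$) are covered by the same computation. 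The one point requiring care — the main bookkeeping obstacle — is the compatibility of the vertical endomorphisms on the different $E^j$ under the pullbacks: one must note that the formula $S(d_{\Ti}^r\theta)=r\,d_{\Ti}^{r-1}\theta$ is independent of the order as long as $r$ does not exceed it, so that pulling back $S{\lambda}$ from $E^{k-1}$ and then applying $d_{\Ti}$ agrees with the operations performed on $E^k$. With this understood, the verification on generators together with the $\cinfty{}$-linearity above completes the proof.
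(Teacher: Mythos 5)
Your proof is correct and follows essentially the same route as the paper: reduce to the generating sections $d_{\Ti}^r\theta$ and compute $S(d_{\Ti}^{r+1}\theta)-d_{\Ti}(r\,d_{\Ti}^{r-1}\theta)=(r+1)d_{\Ti}^r\theta-r\,d_{\Ti}^r\theta=\lambda$. The only difference is that you explicitly verify the $\cinfty{}$-linearity of the commutator $\lambda\mapsto S(d_{\Ti}\lambda)-d_{\Ti}(S\lambda)$ to legitimise the reduction to generators, a step the paper leaves implicit; this is a sound and welcome addition rather than a divergence.
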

\begin{proof}
It is sufficient to prove the proposition for a section ${\lambda}$ of the form ${\lambda}=d_{\Ti}^r{\theta}$ for $r=0,\ldots,k-1$. Using the definition of $S$ we have
\begin{align*}
S(d_{\Ti}{\lambda})-d_{\Ti}(S{\lambda})
&=S(d_{\Ti}^{r+1}{\theta})-d_{\Ti}(S(d_{\Ti}^r{\theta}))\\
&=(r+1)d_{\Ti}^{r}{\theta}-d_{\Ti}(rd_{\Ti}^r{\theta})
=d_{\Ti}^r{\theta}
={\lambda},
\end{align*}
which proves the statement.
\end{proof}

\paragraph{Coordinate expression}
Even though it is not needed in this paper, we will find the local expression of the vertical endomorphism. We take a local basis $\{e_{\alpha}\}$ of sections of $E$ and the dual basis $\{e^{\alpha}\}$ of sections of $E^*$. Then we have a local basis $\{\X_{\alpha},\V^r_{\alpha}\}$ of sections of $\T^EE^k$ and the dual basis $\{\X^{\alpha},\V^{\alpha}_r\}$ of sections of $(\T^EE^k)^*$. We have to find the image by $S$ of such sections. From the definition it follows that $S(\X^{\alpha})=0$. To find the expression of $S(\V^{\alpha}_1)$ we use equation~\eqref{dT.of.the.basis}, $d_{\Ti}\X^{\alpha}=\V^{\alpha}_1-C^{\alpha}_{{\beta}{\gamma}}y^{\beta}_1\X^{\gamma}$, and thus
\[
S(\V^{\alpha}_1)=S(d_{\Ti}\X^{\alpha}+C^{\alpha}_{{\beta}{\gamma}}y^{\beta}_1\X^{\gamma})=S(d_{\Ti}\X^{\alpha})=d_{\Ti}S(\X^{\alpha})+\X^{\alpha}=\X^{\alpha}.
\]
Finally, in order to find $S(\V^{\alpha}_r)$ for $r{\geq}2$ we recall equation~\eqref{dT.of.the.basis}, $\V^{\alpha}_r=d_{\Ti}\V^{\alpha}_{r-1}$, and using Proposition~\ref{S.dT-dT.S=1} we have
\[
S(\V^{\alpha}_r)=S(d_{\Ti}\V^{\alpha}_{r-1})=d_{\Ti}S(\V^{\alpha}_{r-1})+\V^{\alpha}_{r-1}
\]
for $r{\geq}2$. It easily follows by induction that 
\[
S(\V^{\alpha}_r)=(r-1)\V^{\alpha}_{r-1}+d_{\Ti}^{r-1}S(\V^{\alpha}_1),\qquad r{\geq}2.
\]
Therefore we found
\begin{equation}\label{local.S.1}
S=\V_{\alpha}^1\otimes\X^{\alpha}+\sum_{r=2}^k\left[(r-1)\V^r_{\alpha}\otimes\V^{\alpha}_{r-1}+\V^r_{\alpha}\otimes d^r_{\Ti}\X^{\alpha}\right].
\end{equation}

\begin{remark}
A more explicit local expression for $S$ can be obtained as follows. The value of $S(\V^{\alpha}_r)$ can be written in the form 
\[
S(\V^{\alpha}_r)=r\V^{\alpha}_{r-1}+\bigl[d_{\Ti}^{r-1}S(\V^{\alpha}_1)-\V^{\alpha}_{r-1}\bigr]\qquad r{\geq}2.
\]
The expression $d^r_{\Ti}\X^{\alpha}-\V^{\alpha}_r$ is a $(r-1)$th total time derivative of a semibasic 1-form 
\[
d^r_{\Ti}\X^{\alpha}-\V^{\alpha}_r=-d_{\Ti}^{r-1}(C^{\alpha}_{{\beta}{\gamma}}y^{\beta}_1\X^{\gamma}),
\]
which follows from $d_{\Ti}\X^{\alpha}=\V^{\alpha}_1-C^{\alpha}_{{\beta}{\gamma}}y^{\beta}_1\X^{\gamma}$. Therefore we can also write
\begin{equation}\label{local.S.2}
S=\V_{\alpha}^1\otimes\X^{\alpha}+\sum_{r=2}^k\left[r\V^r_{\alpha}\otimes\V^{\alpha}_{r-1}-\V^r_{\alpha}\otimes d^{r-1}_{\Ti}(C^{\alpha}_{{\beta}{\gamma}}y^{\beta}_1\X^{\gamma})\right].
\end{equation}
In more compact way
\begin{equation}\label{local.S.compact}
S=\sum_{r=1}^k r\,\V^r_{\alpha}\otimes\V^{\alpha}_{r-1}-\sum_{r=2}^k\V^r_{\alpha}\otimes d^{r-1}_{\Ti}(C^{\alpha}_{{\beta}{\gamma}}y^{\beta}_1\X^{\gamma}).
\end{equation}
\end{remark}


\subsection{Variational and Cartan operators}
\label{variational.and.Cartan}

Having in mind the procedure of integration by parts that will be needed in the calculus of variations, we can define two differential operators (see~\cite{sections} for the standard case): the \emph{variational operator} $\E$,  mapping sections of $(\T^EE^k)^*$ into sections of $(\T^EE^{2k})^*$, given by 
\begin{equation}
\label{variational.operator}
\E({\lambda})
={\lambda}-d_{\Ti}(S({\lambda}))+\frac{1}{2!}d_{\Ti}^2(S^2({\lambda}))+\ldots+(-1)^{k}\frac{1}{k!}d_{\Ti}^k(S^k({\lambda})),
\end{equation}
and the \emph{Cartan operator} $\S$ which maps sections of $(\T^EE^k)^*$ into sections of $(\T^EE^{2k-1})^*$ given by 
\begin{equation}
\label{Cartan.operator}
\S({\lambda})=S({\lambda})-\frac{1}{2!}d_{\Ti}(S^2({\lambda}))+\ldots+(-1)^{k-1}\frac{1}{k!}d_{\Ti}^{k-1}(S^k({\lambda})),
\end{equation}
for ${\lambda}\in\sec{(\T^EE^k)^*}$. From the definition of $\E$ and $\S$ it is clear that 
\begin{equation}\label{relation.variational.Cartan}
\E({\lambda})={\lambda}-d_{\Ti}(\S({\lambda})).
\end{equation}
Moreover, using the Proposition~\ref{S.dT-dT.S=1}, a long but straightforward calculation shows that $S(\E({\lambda}))=0$, so that $\E({\lambda})$ is ${\tau}_{2k,0}$-semibasic, and also $S^k(\S({\lambda}))=0$, so that $\S({\lambda})$ is ${\tau}_{2k-1,k-1}$-semibasic. Moreover $\S(d_{\Ti}{\lambda})={\lambda}$.

\begin{remark}
In a more systematic way, one can proceed as in~\cite{sections}, where a family of operators $\mathbb{D}_r$ was introduced to study complete lifts of vector fields and other related properties. For $s=0,\ldots,k$  we define the operator $\mathbb{D}_r$ mapping sections of $(\T^EE^k)^*$ into sections of $(\T^EE^{2k-r})^*$  be means of 
\[
\mathbb{D}_r({\lambda})=\sum_{j=r}^k (-1)^{j+r}\frac{1}{j!}d_{\Ti}^{j-r}(S^j({\lambda})),
\]
for ${\lambda}\in\sec{(\T^EE^k)^*}$. Following the arguments in~\cite{sections}, it is easy to see that the image of $\mathbb{D}_r$ are $\tau_{2k-r,k-r}$-semibasic sections, and that we have the relations 
\begin{gather}
\mathbb{D}_{r-1}=\frac{1}{r!}S^{r-1}-d_{\Ti}{\circ}\mathbb{D}_r
\label{rel.2}\\
S\circ\mathbb{D}_r=r\mathbb{D}_{r+1}.
\label{rel.1}
\end{gather}
From the second relation~\eqref{rel.1} it follows that only $\mathbb{D}_0$ and $\mathbb{D}_1$ are relevant, while the others can be determined inductively by $\mathbb{D}_{r+1}=\frac{1}{r}S\circ\mathbb{D}_r$ for $r{\geq}1$. The variational operator is $\mathbb{E}=\mathbb{D}_0$ and the Cartan operator is $\mathbb{S}=\mathbb{D}_1$. Also from~\eqref{rel.1} for $r=0$ we have $S\circ\mathbb{D}_0=0$ so that $\E({\lambda})$ is ${\tau}_{2k,0}$-semibasic. Applying $S^{k-1}$ to~\eqref{rel.1} for $r=1$ we get 
\[
S^k(\mathbb{D}_1({\lambda}))=(k-1)!S(\mathbb{D}_k({\lambda}))=\frac{1}{k}S^{k+1}({\lambda})=0,
\]
so that $\S({\lambda})$ is ${\tau}_{2k-1,k-1}$-semibasic.
\end{remark}

\section{Variational calculus}
\label{variational.description}
Let $J=[t_0,t_1]{\subset}{\Real}$ and fix two points $A_0{\in}E^{k-1}$ and $A_1{\in}E^{k-1}$. Given a Lagrangian function $L\in\cinfty{E^k}$ we consider the action functional\footnote{In what follows the symbol  $S$ stands for the action functional. No confusion with the vertical endomorphism is possible since it will not be used explicitly. We will only need the operators $\mathbb{E}$ and $\mathbb{S}$.}
\begin{equation}
S(a)=\int_{t_0}^{t_1}L(a^k(t))\,dt
\end{equation}
restricted to admissible curves $a$ in $E$ such that $a^{k-1}(t_0)=A_0$ and $a^{k-1}(t_1)=A_1$. We look for critical points of such a functional. 

Of course, to speak about critical points, local maxima or local minima, has no meaning if we do not give explicitly a differential manifold structure to the set of such curves. In the case $k=1$ it was shown in~\cite{VCLA} that the relevant Banach manifold structure is the foliated one $\PJ$, where each connected component (a leaf) is an $E$-homotopy class of admissible curves. Therefore, in the higher-order case we will use the same structure and we impose the additional conditions coming from the boundary conditions. In an alternative but equivalent way, we can restrict $S$ to an $E$-homotopy leaf (which is a Banach submanifold) and we use differential calculus to find the critical points. It follows that finite variations are those defined by $E$-homotopies with the given boundary conditions.

We denote by $m_0,m_1{\in}M$ the base points $m_0={\tau}_{k-1,0}(A_0)$ and $m_1={\tau}_{k-1,0}(A_1)$. From the results in~\cite{VCLA}, the following result is easy to prove.

\begin{theorem}
The set 
\begin{equation}
\PJ_{A_0}^{A_1}=\set{a{\in}\PJ}{\text{$a$ is $C^k$ and $a^{k-1}(t_0)=A_0,\ a^{k-1}(t_1)=A_1$}}
\end{equation}
is a Banach submanifold of $\PJ_{m_0}^{m_1}$. The tangent space to $\PJ_{A_0}^{A_1}$ at a point $a{\in}\PJ_{A_0}^{A_1}$ is 
\begin{equation}
T_a\PJ_{A_0}^{A_1}=\set{{\Xi}^k_a{\sigma}}{\text{${\sigma}$ is $C^k$ and ${\sigma}^{(k-1)}(t_0)=0,\ {\sigma}^{(k-1)}(t_1)=0$}}.
\end{equation}
\end{theorem}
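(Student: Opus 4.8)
The plan is to derive this theorem from the first-order case treated in~\cite{VCLA} together with the submanifold description of $E^k$ established in Section~\ref{jets.of.paths}. The starting point is the known Banach manifold structure on $\PJ_{m_0}^{m_1}$, whose connected components are the $E$-homotopy classes of admissible curves and whose tangent space at $a$ consists of the first-order variational vector fields ${\Xi}_a{\sigma}$ with ${\sigma}$ vanishing at the endpoints. I would first observe that the map $a\mapsto a^{k-1}$ sending an admissible curve to its $(k-1)$-jet prolongation, viewed as $a\in\PJ_{m_0}^{m_1}\mapsto(a^{k-1}(t_0),a^{k-1}(t_1))\in E^{k-1}\times E^{k-1}$, together with the endpoint base-point map, is smooth; this is where one uses the explicit coordinates~\eqref{local.coordinates} showing that the $(k-1)$-jet depends smoothly on the function $a^{\alpha}$ and its derivatives, which in turn depend continuously (indeed smoothly in the $C^k$ topology on curves) on $a$.

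Next I would identify $\PJ_{A_0}^{A_1}$ as the preimage of the point $(A_0,A_1)$ under the evaluation-of-prolongation map restricted to the relevant $E$-homotopy leaf, after first passing to the sub-Banach-manifold of $C^k$ curves. The key point is to show this evaluation map is a submersion onto $E^{k-1}\times E^{k-1}$ (with base points $m_0,m_1$ fixed), so that the preimage is a closed Banach submanifold with tangent space equal to the kernel of the differential. To verify the submersion property I would compute the differential of $a\mapsto a^{k-1}(t_i)$ in the direction ${\Xi}_a{\sigma}$: using Proposition above (the one giving ${\chi}_{k-1}\circ({\Xi}_a{\sigma})^{(k-1)}={\Xi}_a^k{\sigma}$) and equation~\eqref{variational.vector.coordinates}, the value of this differential at $t_i$ is a linear expression in ${\sigma}(t_i),\dot{\sigma}(t_i),\ldots,{\sigma}^{(k-1)}(t_i)$ given by an invertible (triangular) transformation of those jet data; since one can freely prescribe the $(k-1)$-jet of ${\sigma}$ at each endpoint by a suitable bump-function construction, surjectivity onto the tangent space of $E^{k-1}\times E^{k-1}$ follows. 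The kernel is then exactly $\set{{\Xi}^k_a{\sigma}}{{\sigma}\text{ is }C^k,\ {\sigma}^{(k-1)}(t_0)=0,\ {\sigma}^{(k-1)}(t_1)=0}$, which upon unwinding (using that ${\Xi}_a^k{\sigma}$ is, by the remark, the prolongation ${\delta}x^i,{\delta}y^{\alpha}_r$ determined by ${\sigma}$ and its first $k$ derivatives, and that the jet-vanishing conditions on ${\sigma}$ translate into the stated vanishing of ${\sigma}^{(k-1)}$ at the endpoints) gives precisely $T_a\PJ_{A_0}^{A_1}$ as claimed.

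The main obstacle I expect is the functional-analytic bookkeeping needed to make "evaluation of the $(k-1)$-jet at an endpoint" a genuinely smooth submersion between the appropriate Banach manifolds — i.e., identifying the correct Banach space models (the space of $C^k$ admissible curves in a leaf, which is modeled on $C^k$-sections of a pullback bundle subject to linear constraints coming from admissibility) and checking that the relevant trace/evaluation operators are bounded linear and split surjective. Once the ambient Banach manifold structure from~\cite{VCLA} is in place, this is essentially the standard fact that fixing finitely many jet data at boundary points cuts out a closed split submanifold; the Lie-algebroid-specific content is confined to the triangular change of variables between the jet of ${\sigma}$ and the components $(w^i,v^{\mu}_1,\ldots,v^{\mu}_k)$ of ${\Xi}^k_a{\sigma}$, which is exactly~\eqref{variational.vector.coordinates} and is invertible because it is lower-triangular with identity-type diagonal blocks.
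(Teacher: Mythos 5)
The paper offers no proof of this theorem at all: it is stated immediately after the sentence ``From the results in~\cite{VCLA}, the following result is easy to prove,'' so there is nothing to compare your argument against line by line. Your proposal is the natural way to fill that gap and is essentially sound: realize $\PJ_{A_0}^{A_1}$ as the preimage of $(A_0,A_1)$ under the endpoint jet-evaluation map on the leaf, check via the triangular structure of~\eqref{variational.vector.coordinates} (with $\sigma(t_i)=0$ already imposed by working in $\PJ_{m_0}^{m_1}$, the map from $(\dot\sigma(t_i),\ldots,\sigma^{(k-1)}(t_i))$ to $(\delta y^\alpha_1(t_i),\ldots,\delta y^\alpha_{k-1}(t_i))$ is unipotent triangular) that this evaluation is a submersion onto the finite-dimensional fibres of $E^{k-1}$ over $m_0$ and $m_1$, and conclude that the fibre is a closed split submanifold whose tangent space is the kernel, which unwinds to the stated condition $\sigma^{(k-1)}(t_i)=0$. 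Since the target is finite-dimensional, splitting of the kernel is automatic, so the submanifold theorem applies without further work.

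The one point you should not wave away as bookkeeping is the phrase ``after first passing to the sub-Banach-manifold of $C^k$ curves'': the set of $C^k$ admissible curves is a dense, non-closed subset of the $C^1$-type path space constructed in~\cite{VCLA}, hence not a Banach submanifold of it. The correct fix is to rebuild the foliated Banach manifold structure of~\cite{VCLA} with $C^k$ (or $W^{k,p}$) regularity from the outset, so that each $E$-homotopy leaf is modelled on $C^k$ sections and the trace operators $\sigma\mapsto\sigma^{(k-1)}(t_i)$ are bounded; the paper itself glosses over this (``we will use the same structure''), and your last paragraph shows you are aware of it, but as written the phrase suggests an intermediate submanifold that does not exist. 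With that adjustment your argument is complete.
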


\begin{remark}
If $a{\in}\PJ_{m_0}^{m_1}$ is an admissible curve, the connected component of $a$ in $\PJ_{m_0}^{m_1}$ is the equivalence class $H_a$ of admissible curves in $E$ which are $E$-homotopic to $a$. Similarly, if $a{\in}\PJ_{A_0}^{A_1}$, the connected component of $a$ in $\PJ_{A_0}^{A_1}$ is the equivalence class 
$H_a'{\subset}H_a$ of admissible curves in $E$ which are $E$-homotopic to $a$ and have higher-order contact with $a$ at the endpoints.
\end{remark}

It follows that the infinitesimal variations to be used are of the form ${\Xi}_a^k{\sigma}$ with $[{\sigma}]^{k-1}(t_i)=0$, $i=0,1$. Using the classical notation of the calculus of variations
\[
{\delta}x^j={\rho}^j_{\alpha}{\sigma}^{\alpha},\qquad
{\delta}y^{\alpha}_1=\dot{{\sigma}}^{\alpha}+C^{\alpha}_{{\beta}{\gamma}}y^{\beta}_1{\sigma}^{\gamma},\qquad
{\delta}y^{\alpha}_r=\frac{d^{r-1}}{dt^{r-1}}{\delta}y^{\alpha}_1,\quad r=2,\ldots,k,
\]
with $\frac{d^r{\sigma}^{\alpha}}{dt^r}(t_i)=0$ for $r=0,\ldots,k-1$, $i=0,1$.

Given an admissible curve $a$ we take a curve ${\alpha}_s$ on $\PJ_{A_0}^{A_1}$ with ${\alpha}_0=a$, and the corresponding $E$-homotopy ${\alpha}(s,t)dt+{\beta}(s,t)ds$ where ${\alpha}_s(t)={\alpha}(s,t)$. Taking the derivative at $s=0$,
\[
\frac{d}{ds} S({\alpha}_s)\at{s=0}=\int_{t_0}^{t_1}\frac{d}{ds}L({\alpha}_s{}^k(t))\at{s=0}\,dt
=\int_{t_0}^{t_1}\pai{\d L(a^k(t))}{\frac{d}{ds}{\alpha}_s{}^k(t)\at{s=0}}\,dt.
\]
Defining ${\sigma}(t)={\beta}(0,t)$ we have that $\frac{d}{ds}{\alpha}_s{}^k(t)\at{s=0}={\Xi}_a^k{\sigma}(t)$ and hence 
\begin{equation}
\pai{\d S(a)}{{\Xi}_a{\sigma}}
=\frac{d}{ds} S({\alpha}_s)\at{s=0}
=\int_{t_0}^{t_1}\pai{\d L(a^k(t))}{{\Xi}_a^k{\sigma}(t)}\,dt.
\end{equation}
In the above expressions $\d$ stands for the standard exterior differential on a manifold. Using the exterior differential $d$ of the algebroid $\T^EE^k$ and taking into account that ${\sigma}^k_a(t)=(a^k(t),{\sigma}(t),{\Xi}_a^k{\sigma}(t))$ we have
\begin{equation}
\mathrm{d}S(a)({\Xi}_a{\sigma})=\int_{t_0}^{t_1}\pai{dL(a^k(t))}{{\sigma}_a^{k}(t)}\,dt.
\end{equation}

Let us consider the sections $\E(dL)$ and ${\theta}_L=\S(dL)$, which are related by $\E(dL)=dL-d_{\Ti}{\theta}_L$. We have 
\begin{align*}
\pai{dL(a^k(t))}{{\sigma}_a^k(t)}
&=\pai{\E(dL)(a^{2k}(t))}{{\sigma}_a^{2k}(t)}+\pai{(d_{\Ti}{\theta}_L)(a^{2k}(t))}{{\sigma}_a^{2k}(t)}\\
&=\pai{\E(dL)(a^{2k}(t))}{{\sigma}_a^{2k}(t)}+\frac{d}{dt}\pai{{\theta}_L(a^{2k-1}(t))}{{\sigma}_a^{2k-1}(t)}.
\end{align*}
Taking into account that $\E(dL)$ is ${\tau}_{k,0}$-semibasic we will consider the associated map $\map{{\delta}L}{E^{2k}}{E^*}$. Similarly, taking into account that ${\theta}_L=\S(dL)$ is ${\tau}_{2k-1,k-1}$-semibasic we will consider the associated map $\map{\FL}{E^{2k-1}}{(\T^EE^{k-1})^*}$. We then have
\[
\pai{dL(a^k(t)}{{\sigma}_a^k(t)}
=\pai{{\delta}L(a^{2k}(t))}{{\sigma}(t)}+\frac{d}{dt}\pai{\FL(a^{2k-1}(t))}{{\sigma}_a^{k-1}(t)}.
\]
Inserting this into the variation of the action we get
\begin{align*}
\mathrm{d}S(a)({\Xi}_a{\sigma})
&=\int_{t_0}^{t_1}\left[\pai{{\delta}L(a^{2k}(t))}{{\sigma}(t)}+\frac{d}{dt}\pai{\FL(a^{2k-1}(t))}{{\sigma}_a^{k-1}(t)}\right]\,dt\\
&=\int_{t_0}^{t_1}\pai{{\delta}L(a^{2k}(t))}{{\sigma}(t)}\,dt+\pai{\FL(a^{2k-1}(t))}{{\sigma}_a^{k-1}(t)}\Big|_{t_0}^{t_1}.
\end{align*}
After  imposing the boundary conditions we arrive to 
\begin{equation}
\mathrm{d}S(a)({\Xi}_a{\sigma})=\int_{t_0}^{t_1}\pai{{\delta}L(a^{2k}(t)}{{\sigma}(t)}\,dt.
\end{equation}
Since ${\sigma}$ is arbitrary, from the fundamental lemma of the Calculus of Variations, we find that a curve $a$ is a critical point of the action if and only if it satisfies ${\delta}L(a^{2k}(t))=0$.  We have proved the following result.

\begin{theorem}
An admissible curve $a{\in}\PJ_{A_0}^{A_1}$ is a critical point of the action functional $\map{S}{\PJ_{A_0}^{A_1}}{{\Real}}$ if and only if it satisfies the \emph{Euler-Lagrange equations} ${\delta}L(a^{2k}(t))=0$ for all $t{\in}J$.
\end{theorem}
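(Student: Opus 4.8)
The plan is to follow the computation already laid out in the text immediately preceding the statement, organizing it as a clean chain of equalities and then invoking the fundamental lemma of the calculus of variations. The key input is the Banach manifold structure of $\PJ_{A_0}^{A_1}$ (the preceding theorem): because this is a genuine Banach submanifold, a curve $a$ is a critical point of $S$ precisely when $\mathrm{d}S(a)=0$ as a linear functional on $T_a\PJ_{A_0}^{A_1}$, and the latter is the space of all ${\Xi}_a^k{\sigma}$ with ${\sigma}$ of class $C^k$ and $[{\sigma}]^{k-1}(t_i)=0$ for $i=0,1$. So no ad hoc notion of ``admissible variation'' is needed; everything reduces to standard differential calculus on a Banach manifold.

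First I would compute $\mathrm{d}S(a)({\Xi}_a{\sigma})$ by differentiating under the integral sign along a curve ${\alpha}_s$ in $\PJ_{A_0}^{A_1}$ through $a$, obtaining $\int_{t_0}^{t_1}\pai{\d L(a^k(t))}{{\Xi}_a^k{\sigma}(t)}\,dt$, and then rewrite the integrand using the Lie-algebroid exterior differential on $\T^EE^k$ together with the $k$th-order complete lift ${\sigma}_a^k(t)=(a^k(t),{\sigma}(t),{\Xi}_a^k{\sigma}(t))$, giving $\pai{dL(a^k(t))}{{\sigma}_a^k(t)}$. Next I would apply the relation $\E(dL)=dL-d_{\Ti}{\theta}_L$ with ${\theta}_L={\S}(dL)$ from~\eqref{relation.variational.Cartan}, together with the defining property $d_{\Ti}F({a}^{k+1}(t))=\frac{d}{dt}F({a}^k(t))$ of the total time derivative operator, to split the integrand into a ``bulk'' term and a total derivative: $\pai{dL}{{\sigma}_a^k}=\pai{\E(dL)}{{\sigma}_a^{2k}}+\frac{d}{dt}\pai{{\theta}_L}{{\sigma}_a^{2k-1}}$. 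Here one uses that the complete lift is compatible with the projections ${\tau}_{k,l}$ (so restricting or prolonging ${\sigma}_a^{\bullet}$ is harmless).

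Then I would use the semibasic character of the two pieces, established in Section~\ref{variational.and.Cartan}: since $\E(dL)$ is ${\tau}_{2k,0}$-semibasic it descends to a map ${\delta}L\colon E^{2k}\to E^*$ and the bulk term becomes $\pai{{\delta}L(a^{2k}(t))}{{\sigma}(t)}$; since ${\theta}_L$ is ${\tau}_{2k-1,k-1}$-semibasic it descends to $\FL\colon E^{2k-1}\to(\T^EE^{k-1})^*$ and the boundary term becomes $\pai{\FL(a^{2k-1}(t))}{{\sigma}_a^{k-1}(t)}\big|_{t_0}^{t_1}$. The boundary term vanishes because ${\sigma}_a^{k-1}(t_i)$ is determined by $[{\sigma}]^{k-1}(t_i)=0$, which is exactly the boundary condition defining $T_a\PJ_{A_0}^{A_1}$. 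Finally, since every $C^k$ section ${\sigma}$ along ${\gamma}$ with ${\gamma}={\tau}\circ a$ arises this way (after subtracting a fixed section to meet the endpoint conditions one still has all compactly-supported ${\sigma}$ available), the fundamental lemma of the calculus of variations forces ${\delta}L(a^{2k}(t))=0$ for all $t\in J$; conversely this condition clearly makes $\mathrm{d}S(a)$ vanish. This proves the equivalence.

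The main obstacle is not any single step but making the bookkeeping of jet orders and projections airtight: one must check that all the identities among ${\sigma}_a^k$, ${\sigma}_a^{2k-1}$, ${\sigma}_a^{2k}$ are consistent under $\T{\tau}_{k,l}$, that the pairing of a semibasic form with a complete lift only sees the lower-order data claimed, and that the application of the fundamental lemma is legitimate given the endpoint constraints — i.e.\ that the constraints only pin down finitely many boundary jets and leave enough freedom in the interior. All of these are essentially the same verifications carried out for $k=1$ in~\cite{VCLA}, now bootstrapped through the operators $\mathbb{E}$ and $\mathbb{S}$; the genuinely new content is purely the higher-order combinatorics of $d_{\Ti}$ and $S$, which has already been dispatched in Sections~\ref{jets.of.paths} and~\ref{vertical.endomorphism}.
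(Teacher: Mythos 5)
Your proposal is correct and follows essentially the same route as the paper: the paper's own argument is precisely the computation preceding the theorem statement, namely differentiating the action along a curve in $\PJ_{A_0}^{A_1}$, rewriting the integrand via the decomposition $dL=\E(dL)+d_{\Ti}\S(dL)$, using the semibasic character of $\E(dL)$ and ${\theta}_L$ to extract ${\delta}L$ and $\FL$, killing the boundary term with the endpoint conditions, and invoking the fundamental lemma. The only observations you add beyond the paper (the compatibility of complete lifts with projections and the legitimacy of the fundamental lemma under the jet endpoint constraints) are consistent with what the paper implicitly relies on.
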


In order to find the local expression of the Euler-Lagrange equations we just calculate the variations of the Lagrangian. To simplify the notation we will use the symbol ${\delta}y^{\alpha}_r$ for the expression ${\delta}y^{\alpha}_r=\frac{d^{r-1}}{dt^{r-1}}[\dot{{\sigma}}^{\alpha}+C^{\alpha}_{{\beta}{\gamma}}y^{\beta}_1{\sigma}^{\gamma}]$, so that
\[
\pai{dL}{{\sigma}^k_a}
={\rho}^i_{\alpha}{\sigma}^{\alpha}\pd{L}{x^i}+\sum_{r=1}^k{\delta}y^{\alpha}_r\pd{L}{y^{\alpha}_r}.
\]
A straightforward calculation shows that, for $r{\geq}2$,  
\[
\pd{L}{y^{\alpha}_r}{\delta}y^{\alpha}_r=(-1)^{r+1}\frac{d^{r-1}}{dt^{r-1}}\left(\pd{L}{y^{\alpha}_r}\right){\delta}y^{\alpha}_1
+\frac{d}{dt}
 \left[\sum_{j=0}^{r-2}(-1)^j\frac{d^{j}}{dt^{j}}\left(\pd{L}{y^{\alpha}_r}\right){\delta}y^{\alpha}_{r-j-1}\right],
\]
and hence 
\begin{align*}
\pai{dL}{{\sigma}^k_a}
&={\rho}^i_{\alpha}{\sigma}^{\alpha}\pd{L}{x^i}+\sum_{r=1}^{k}{\delta}y^{\alpha}_r\pd{L}{y^{\alpha}_r}\\
&={\rho}^i_{\alpha}{\sigma}^{\alpha}\pd{L}{x^i}+{\delta}y^{\alpha}_1\sum_{r=1}^{k}(-1)^{r+1}\frac{d^{r-1}}{dt^{r-1}}\left(\pd{L}{y^{\alpha}_r}\right)+\\
&\qquad+\frac{d}{dt}\left[\sum_{r=2}^k\sum_{j=0}^{r-2}(-1)^j\frac{d^{j}}{dt^{j}}\left(\pd{L}{y^{\alpha}_r}\right){\delta}y^{\alpha}_{r-j-1}\right].
\end{align*}
In the first term, we take into account that ${\delta}y^{\alpha}_r=\dot{{\sigma}}^{\alpha}+C^{\alpha}_{{\beta}{\gamma}}y^{\beta}_1{\sigma}^{\gamma}$ and we perform a further integration by parts. Denoting by ${\pi}_{\alpha}$ the coefficient of ${\delta}y^{\alpha}_1$, 
\begin{equation}\label{momenta}
{\pi}_{\alpha}=\sum_{r=1}^{k}(-1)^{r-1}\frac{d^{r-1}}{dt^{r-1}}\left(\pd{L}{y^{\alpha}_r}\right),
\end{equation}
we obtain 
\begin{align*}
\pai{dL}{{\sigma}^k_a}
&=\left\{{\rho}^i_{\alpha}\pd{L}{x^i}-\Bigl[\frac{d{\pi}_{\alpha}}{dt}+{\pi}_{\gamma}C^{\gamma}_{{\alpha}{\beta}}y^{\beta}_1\Bigr]\right\}{\sigma}^{\alpha}+\\
&\qquad+\frac{d}{dt}\left\{{\pi}_{\alpha}{\sigma}^{\alpha}+\sum_{r=2}^k\sum_{j=0}^{r-2}(-1)^j\frac{d^{j}}{dt^{j}}\left(\pd{L}{y^{\alpha}_r}\right){\delta}y^{\alpha}_{r-j-1}\right\}.
\end{align*}
Relabeling the sums we finally get
\begin{equation}\label{deltaL.plus.thetaL}
\begin{aligned}
\pai{dL}{{\sigma}^k_a}
&=\left\{{\rho}^i_{\alpha}\pd{L}{x^i}-\Bigl[\frac{d{\pi}_{\alpha}}{dt}+{\pi}_{\gamma}C^{\gamma}_{{\alpha}{\beta}}y^{\beta}_1\Bigr]\right\}{\sigma}^{\alpha}+\\
&\qquad+\frac{d}{dt}\left\{{\pi}_{\alpha}{\sigma}^{\alpha}+\sum_{r=1}^k\left[\sum_{j=r}^{k-1}(-1)^{j-r}\frac{d^{j-r}}{dt^{j-r}}\left(\pd{L}{y^{\alpha}_{j+1}}\right)\right]{\delta}y^{\alpha}_r\right\}.
\end{aligned}
\end{equation}
Therefore, the Euler-Lagrange equations take the form of the system of ordinary differential equations 
\begin{equation}\label{EL.equations}
\left\{\begin{aligned}
&\dot{x}^i={\rho}^i_{\alpha}y^{\alpha}_1\\
&\dot{{\pi}}_{\alpha}+{\pi}_{\gamma}C^{\gamma}_{{\alpha}{\beta}}y^{\beta}_1={\rho}^i_{\alpha}\pd{L}{x^i},
\end{aligned}\right.
\end{equation}
where the functions ${\pi}_{\alpha}$ are given by~\eqref{momenta}. 

The coordinate expressions of ${\delta}L$, $\FL$ and ${\theta}_L$ can be obtained from the definition using the  local expressions for the vertical endomorphism. However, from the integration by parts formula we know that the integrand is $\pai{{\delta}L(a^2k)}{{\sigma}}$ and that the boundary terms are $\pai{\FL(a^{2k-1})}{{\sigma}^{(k-1)}}$, and hence from equation~\eqref{deltaL.plus.thetaL} we get that the local expression of ${\delta}L$ is 
\begin{equation}
\label{deltaL.local}
{\delta}L=\Bigl[{\rho}^i_{\alpha}\pd{L}{x^i}-\bigl(d_{\Ti}{\pi}_{\alpha}+{\pi}_{\gamma}C^{\gamma}_{{\alpha}{\beta}}y^{\beta}_1\bigr)\Bigr]e^{\alpha}.
\end{equation}
Taking the induced coordinates $(x^i,y^{\alpha}_1,\ldots,y^{\alpha}_{k-1}\,,\,{\mu}_{\alpha}^0,\ldots,{\mu}_{\alpha}^{k-1})$ on $(\T^EE^k)^*$, the local expression of the Legendre transformation $\map{\FL}{E^{2k-1}}{(\T^EE^{k-1})^*}$ is of the form
\[
\FL(x^i,y^{\alpha}_1,\ldots,y^{\alpha}_{k-1}\,,\,y^{\alpha}_k,\ldots, y^{\alpha}_{2k-1})
=(x^i,y^{\alpha}_1,\ldots,y^{\alpha}_{k-1}\,,\,{\mu}_{\alpha}^0,\ldots,{\mu}_{\alpha}^{k-1}),
\]
where the ${\mu}_{\alpha}^r$ are given by the following functions 
\begin{equation}\label{all.momenta}
{\mu}_{\alpha}^r=\p^r_{\alpha}(x^j,y^{\alpha}_1,\ldots,y^{\alpha}_{2k-1})\equiv\sum_{j=r}^{k-1}(-1)^{j-r}d_{\Ti}^{j-r}\left(\pd{L}{y^{\alpha}_{j+1}}\right).
\end{equation}
Notice that the momenta $\p^r_{\alpha}$ satisfy the relation
\[
\p_{\alpha}^{r-1}=\pd{L}{y^{\alpha}_r}-d_{\Ti}\p^r_{\alpha},\qquad r{\geq}1,
\]
which can serve to define them by recurrence starting from $\p_{\alpha}^{k-1}=\pd{L}{y^{\alpha}_k}$. Also notice that $\p_{\alpha}^0={\pi}_{\alpha}$.

Finally, the coordinate expression of the Cartan form readily follows from that of $\FL$, 
\begin{equation}
{\theta}_L=\sum_{r=0}^{k-1}\p_{\alpha}^r\,\V^{\alpha}_r={\pi}_{\alpha}\,\X^{\alpha}+\sum_{r=1}^{k-1}\p_{\alpha}^r\,\V^{\alpha}_r.
\end{equation}
with $\p_{\alpha}^r$ given by~\eqref{all.momenta}.
\paragraph{Noether's theorem}
An immediate consequence of our variational formalism is the following version of Noether's theorem.
\begin{theorem}
If ${\eta}$ is a section of $E$ such that $d_{{\eta}^k}L=d_{\Ti}F$ for some function $F{\in}\cinfty{E^{k-1}}$, then the function $G{\in}\cinfty{E^{2k-1}}$ given by $G=F-\pai{{\theta}_L}{{\eta}^{2k-1}}$ is a first integral.
\end{theorem}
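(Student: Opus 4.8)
The plan is to reproduce, at the level of sections of $\T^EE^k$, the same integration-by-parts identity that underlies the first variation formula, but now applied to the complete lift $\eta^k$ of the symmetry generator instead of a generic variation, and then to feed in the hypothesis and the Euler--Lagrange equations.

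First I would note that, $L$ being a function, $d_{\eta^k}L=i_{\eta^k}dL=\pai{dL}{\eta^k}$, a function on $E^k$. I would then pull everything back to $E^{2k}$ (using $\T\tau_{2k,k}\circ\eta^{2k}=\eta^k\circ\tau_{2k,k}$, and suppressing pullbacks as in the rest of the paper) and insert the decomposition $dL=\E(dL)+d_{\Ti}\theta_L$ with $\theta_L=\S(dL)$, obtaining
\[
d_{\eta^k}L=\pai{\E(dL)}{\eta^{2k}}+\pai{d_{\Ti}\theta_L}{\eta^{2k}}.
\]
For the last term I would apply~\eqref{complete.lift.commutes.with.T} in the form $d_{\Ti}\circ i_{\eta^{2k-1}}=i_{\eta^{2k}}\circ d_{\Ti}$ to the form $\theta_L\in\sec{(\T^EE^{2k-1})^*}$, which gives $\pai{d_{\Ti}\theta_L}{\eta^{2k}}=d_{\Ti}\pai{\theta_L}{\eta^{2k-1}}$. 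For the first term I would use that $\E(dL)$ is $\tau_{2k,0}$-semibasic, with associated map $\map{\delta L}{E^{2k}}{E^*}$, together with $\T\tau_{2k,0}\circ\eta^{2k}=\eta\circ\tau_{2k,0}$, so that $\pai{\E(dL)}{\eta^{2k}}$ is the function $A\mapsto\pai{\delta L(A)}{\eta(\tau_{2k,0}(A))}$. Combining these with the hypothesis $d_{\eta^k}L=d_{\Ti}F$ and the definition $G=F-\pai{\theta_L}{\eta^{2k-1}}$ yields, as an identity of functions on $E^{2k}$,
\[
d_{\Ti}G(A)=\pai{\delta L(A)}{\eta(\tau_{2k,0}(A))}.
\]

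Finally I would evaluate this identity along a solution: if $a$ is an admissible curve satisfying the Euler--Lagrange equations then $\delta L(a^{2k}(t))=0$ for all $t$, hence $d_{\Ti}G(a^{2k}(t))=0$; since $d_{\Ti}G(a^{2k}(t))=\frac{d}{dt}G(a^{2k-1}(t))$, the function $G$ is constant along $a^{2k-1}$, which is exactly the statement that $G$ is a first integral. The one point that will require attention is the bookkeeping of spaces and jet-projection pullbacks, since every object has to be transported to $E^{2k}$ and the commutation property must be invoked at prolongation order $2k-1$; beyond that there is essentially no analytic content, as the integration by parts is already packaged into the operators $\E$ and $\S$.
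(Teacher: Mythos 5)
Your argument is correct and is essentially the paper's own proof: the same decomposition $dL=\E(dL)+d_{\Ti}\theta_L$, the same use of the commutation identity $d_{\Ti}\circ i_{\eta^{2k-1}}=i_{\eta^{2k}}\circ d_{\Ti}$, and the same evaluation of $d_{\Ti}G=\pai{\delta L}{\eta}$ along a solution. The paper merely writes the chain of equalities more tersely, leaving the pullback bookkeeping implicit.
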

\begin{proof}
Indeed, we have 
\[
d_{\Ti}G
=d_{\Ti}F-d_{\Ti}\pai{{\theta}_L}{{\eta}^{2k-1}}
=\pai{dL}{{\eta}^k}-\pai{d_{\Ti}{\theta}_L}{{\eta}^{2k}}
=\pai{{\delta}L}{{\eta}}.
\]
Therefore, for any solution $a$ of the Euler-Lagrange equations we have
\[
\frac{d}{dt}(G{\circ}a^{2k-1})=(d_{\Ti}G){\circ}a^{2k}=\pai{{\delta}L(a^{2k})}{{\eta}{\circ}{\gamma}}=0,
\]
so that $G$ is constant along any solution of the Euler-Lagrange equations.
\end{proof}

In particular, if ${\eta}$ is an infinitesimal symmetry of the Lagrangian, i.e.\ $d_{{\eta}^k}L=0$, then the function $\pai{{\theta}_L}{{\eta}^{2k-1}}{\in}\cinfty{E^{2k-1}}$, the momentum in the direction of ${\eta}$, is a first integral.


\section{Examples}
\label{examples}

We will consider in this section some typical examples of Lie algebroids and we will show the form of the Euler-Lagrange equations.


\subsection{The standard case, parameters and quasi-velocities}
In the standard case $E=TM$ taking a coordinate basis $e_i=\pd{}{x^i}$ we recover the standard higher-order Euler-Lagrange equations~\cite{LeRo,CrSaCa,Tu76,PrRo}. The same equations hold when we consider a  system defined by a Lagrangian $L_{\lambda}([{\gamma}]^k)\equiv L({\lambda},[{\gamma}]^k)$, depending on additional parameters ${\lambda}{\in}\Lambda$. In this case the Lie algebroid is $E=\Lambda\times TM\to\Lambda\times M$ with ${\rho}({\lambda},v)=(0_{\lambda},v)$ and the bracket is the bracket of vector fields on $M$ depending on the variables ${\lambda}$ as parameters. 

Also, the formalism developed here allows naturally to use a different local basis $\{e_i={\rho}^j_i\pd{}{x^j}\}$ of vector fields in $M$. In such case the associated coordinates $y^i_1$ are called quasi-velocities and the local expressions for the Euler-Lagrange equations that we have got is the Euler-Lagrange equations written in quasi-velocities, which are sometimes called the \emph{higher-order Hammel equations}. In that expressions, the structure functions $C^i_{jk}$ are the so called Hammel's transpositional symbols, defined by the equation
\[
{\rho}^l_j\pd{{\rho}^i_k}{x^l}-{\rho}^l_k\pd{{\rho}^i_j}{x^l}={\rho}^i_lC^l_{jk}.
\]
For more information about quasi-velocities and their use in Mechanics see~\cite{LMLA, Ca-quasi} and for their use in dynamic optimal control see~\cite{Ab-quasi}. 


\subsection{Systems with holonomic constraints}
Let $E{\subset}TM$ be an integrable subbundle (i.e. an integrable regular distribution on $M$). A curve $\map{a}{{\Real}}{E}$ is admissible if and only if the base curve $\map{{\gamma}={\tau}{\circ}a}{{\Real}}{M}$ is contained into an integral leaf of $E$ and $a=\dot{{\gamma}}$. If we denote by $\calf$ the foliation defined by $E$, so that $E=T\calf$, then it follows that $E^k=T^k\calf$, that is, it is the set of $k$-jets of curves contained in the leaves of $\calf$.

Given a Lagrangian $L{\in}\cinfty{E^k}$ an admissible curve $a=\dot{{\gamma}}$ is a critical point of the action if and only if ${\gamma}$ is a critical point of the restriction of the Lagrangian to the $k$-tangent $T^k(\calf_{{\gamma}(t_0)})$ to the leaf $\calf_{{\gamma}(t_0)}$ which contains ${\gamma}$. This follows from standard optimization results. Therefore the Euler-Lagrange equations can be obtained as the Euler-Lagrange equations for the restriction of the Lagrangian to every leaf, usually called the \emph{holonomic Euler-Lagrange equations}.

In local coordinates $(x^i)=(q^a,w^A)$ adapted to the foliation, so that the leaves are given by the equations $w^A=k^A=\mathrm{constant}$, we have coordinates $(q^a_{(r)},w^A)$, $r=0,\ldots,k$, in $E^k$ and the Euler-Lagrange equations read
\begin{equation}
\label{Euler-Lagrange.holonomic}
\sum_{r=0}^k(-1)^r\frac{d^r}{dt^r}\pd{L}{q^a_{(r)}}=0,\qquad w^A=k^A,
\end{equation}
that is, the standard Euler-Lagrange equations on the variables $q$ depending on $w^A$ as parameters.

Alternatively, one can take a local basis $\{e_a,e_A\}$ of vector fields adapted to the distribution $E$, i.e. $E=\operatorname{span}(\{e^a\})$ and then the Euler-Lagrange equations can be written as $\pai{{\delta}\tilde{L}(a(t)}{e_a}=0$, where $\tilde{L}$ is any extension of $L{\in}\cinfty{E}$ to a function on $TM$. In this way we get an expression of the Euler-Lagrange equations with holonomic constraints written in terms of quasi-velocities.


\subsection{Lagrangian systems on Lie algebras}
A Lie algebra $\g$ can be considered as a Lie algebroid over a singleton $\g\to\{e\}$. The anchor vanishes from where it follows that every curve $\map{{\xi}}{{\Real}}{\g}$ is admissible and hence we have that $\g^k$ is just the cartesian product of $k$ copies of $\g$,
\[
[{\xi}]^{k-1}\equiv\Bigl({\xi}(0),\dot{{\xi}}(0),\ddot{{\xi}}(0),\ldots,\frac{d^{k-1}{\xi}}{dt^{k-1}}(0)\Bigr){\in}\g\times\cdots\times\g.
\] 
A section of $\g\to\{e\}$ is just an element of $\g$ and a local basis $\{e_{\alpha}\}$ of $\g$ provides global coordinates $({\xi}_1,{\xi}_2,\ldots,{\xi}_{k})$ in $\g^k$. Variational vector fields are of the form
\[
{\Xi}_{\xi}{\sigma}=\Bigl({\delta}_{\sigma}{\xi},\frac{d}{dt}{\delta}_{\sigma}{\xi},\ldots,\frac{d^{k-1}}{dt^{k-1}}{\delta}_{\sigma}{\xi}\Bigr)
\qquad\text{with}\quad {\delta}_{\sigma}{\xi}(t)=\dot{{\sigma}}(t)+[{\xi}(t),{\sigma}(t)].
\]
The Euler-Lagrange equations for a Lagrangian $L{\in}\cinfty{\g^k}$ are  
\begin{equation}
\label{Euler-Poincare}
\dot{{\pi}}+\ad^*_{{\xi}}{\pi}=0,
\end{equation}
where ${\pi}$ is given by~\eqref{momenta}, which in the present case takes the global form
\begin{equation}\label{momenta.Lie.algebra}
{\pi}=\sum_{r=1}^{k}(-1)^{r+1}\frac{d^{r-1}}{dt^{r-1}}\left(\frac{{\delta}L}{{\delta}{\xi}_r}\right).
\end{equation}
In this expression $\frac{{\delta}L}{{\delta}{\xi}_r}$ stands for the globally defined partial derivative of $L$ with respect to ${\xi}_r$ given by 
\[
\pai{\frac{{\delta}L}{{\delta}{\xi}_r}({\xi}^0_1,\ldots,{\xi}^0_k)}{v}=\frac{d}{ds}L({\xi}^0_1,\ldots,{\xi}^0_{r-1},{\xi}^0_r+sv,{\xi}^0_{r+1},\ldots,{\xi}^0_k)\at{s=0}.
\]
These equations are called the higher-order Euler-Poincar{\'e} equations~\cite{GBHoetal,CoMa}. They can be interpreted as the equations for parallel transport of the momentum ${\pi}$ with respect to the canonical $\g$-connection on $\g$ defined by $D_{\xi}{\zeta}=[{\xi},{\zeta}]$. See~\cite{PepinFest} for the details in the first order case.

The same kind of equations are obtained in the case of a bundle of Lie algebras $\map{{\tau}}{K}{M}$ where the bracket depends on the variables in $M$, that is, we can have different Lie algebra structures on the fibres $K_m$ for different $m{\in}M$.
 

\subsection{Systems with advected parameters}
We consider a Lie algebra $\g$ acting on a manifold $M$ by means of a morphism of Lie algebras $\g\to\vectorfields{M}$; ${\xi}\mapsto {\xi}_M$. The trivial bundle $\map{{\tau}=\pr_1}{E=M\times\g}{M}$ is endowed with a Lie algebroid structure with anchor ${\rho}(m,{\xi})={\xi}_M(m)$ and where the bracket is induced naturally by the bracket on $\g$ (the bracket of constant sections is just the constant section corresponding to the bracket on $\g$). 

A curve $(m(t),{\xi}(t))$ is admissible if and only if $\dot{m}(t)={\xi}(t)_M(m(t))$. It follows that they are determined by the curve ${\xi}(t)$ and the initial value $m(0)$, and hence we have the identification $E^k\equiv M\times\g^k$ given by
\[
[\,m,{\xi}\,]^{k-1}\equiv \Bigl(m(0),{\xi}(0),\dot{{\xi}}(0),\ldots,\frac{d^{k-1}{\xi}}{dt^{k-1}}(0)\Bigr).
\] 
Variational vector fields are of the form 
\[
{\Xi}_{\xi}{\sigma}=\Bigl({\delta}_{\sigma}m,{\delta}_{\sigma}{\xi},\frac{d}{dt}{\delta}_{\sigma}{\xi},\ldots,\frac{d^{k-1}}{dt^{k-1}}{\delta}_{\sigma}{\xi}\Bigr),
\]
with ${\delta}_{\sigma}m(t)={\rho}(m(t),{\xi}(t))$ and  ${\delta}_{\sigma}{\xi}(t)=\dot{{\sigma}}(t)+[{\xi}(t),{\sigma}(t)]$.

The Euler-Lagrange equations take the form of the so called \emph{Euler-Poincar{\'e} equations with advected parameters}
\begin{equation}
\label{Euler-Poincare.with.advected.parameters}
\left\{\begin{aligned}
&\dot{m}={\xi}_M(m),\\
&\dot{{\pi}}+\ad^*_{{\xi}}{\pi}={\rho}_m^\star\left(\frac{{\delta}L}{{\delta}m}\right),
\end{aligned}\right.
\end{equation}
where ${\pi}$ is given by a expression similar to~\eqref{momenta.Lie.algebra} but depending also on $m$, and ${\rho}_m$ is the restriction $\map{{\rho}_m}{\g}{T_mM}$ of ${\rho}$ to the fibre over $m{\in}M$, i.e.\ ${\rho}_m({\xi})={\xi}_M(m)$. 


\subsection{Systems with symmetry}
We consider a free and proper action of a Lie group $G$ on manifold $Q$, so that $\map{p}{Q}{M=Q/G}$ is a principal bundle. The vector bundle $\map{{\tau}=p{\circ}{\tau}_Q}{E=TQ/G}{M}$ has a natural Lie algebroid structure known as the Atiyah algebroid. The anchor is ${\rho}([v]_G)=Tp(v)$. Sections of $E$ are in one to one correspondence to invariant vector fields on $Q$, and the bracket of two invariant vector fields is also an invariant vector field, defining in this way a bracket on $\sec{E}$. 

A curve ${\xi}(t)$ in $E=TQ/G$ is admissible if and only if there exists a curve $q(t)$ on $q$ such that ${\xi}(t)=[\dot{q}(t)]_G$. It follows that there is a  canonical identification $\map{{\psi}}{T^kQ/G}{(TQ/G)^k}$ given by  ${\psi}([\,[q]^k\,]_G)=[\,[\dot{q}]_G\,]^{k-1}$ (where on the left hand side the action of $G$ on $T^kQ$ is by $k$-jet prolongation of the action of $G$ on $Q$). 

To find some general enough expression for the Euler-Lagrange equations we can consider the case of a trivial bundle $Q=M\times G$, so that $E=TQ/G=TM\times\g$. A curve $(v(t),{\xi}(t))$ in $E$ is admissible if and only if $v(t)=\dot{m}(t)$, where $m$ is the base curve of $v$. Therefore admissible curves are of the form $(\dot{m}(t),{\xi}(t))$ for $m(t)$ a curve in $M$ and ${\xi}(t)$ a curve in $\g$, and we have the identification $E^k\equiv T^kM\times\g^k$, given by 
\[
[\dot{m},{\xi}]^{k-1}\equiv\Bigl([m]^k;{\xi}(0),\dot{{\xi}}(0),\ldots,\frac{d^{k-1}{\xi}}{dt^{k-1}}(0)\Bigr).
\]
Variational vector fields are of the form 
\[
{\Xi}_{(m,{\xi})}({\zeta},{\sigma})=\Bigl({\zeta},\dot{{\zeta}},\ldots,{\zeta}^{(k)};{\delta}_{\sigma}{\xi},\frac{d}{dt}{\delta}_{\sigma}{\xi},\ldots,\frac{d^{k-1}}{dt^{k-1}}{\delta}_{\sigma}{\xi}\Bigr),
\]
with ${\delta}_{\sigma}{\xi}(t)=\dot{{\sigma}}(t)+[{\xi}(t),{\sigma}(t)]$.

A Lagrangian $L$ on $T^kQ/G$, generally defined in terms of a $G$-invariant Lagrangian on $T^kQ$, produces the following set of Euler-Lagrange equations, which are known as the \emph{Lagrange-Poincar{\'e} equations},
\begin{equation}\label{Lagrange-Poincare}
\left\{\begin{aligned}
&\dot{{\pi}}_M=\pd{L}{m}\\
&\dot{{\pi}}+\ad^*_{{\xi}}{\pi}=0,
\end{aligned}\right.
\end{equation}
where ${\pi}$ is given by a expression similar to~\eqref{momenta.Lie.algebra} and ${\pi}_M$ is 
\[
{\pi}_M=\pd{L}{m_{(1)}}-d_{\Ti}\pd{L}{m_{(2)}}+\cdots+(-1)^{k-1}d^{k-1}_{\Ti}\pd{L}{m_{(k)}}.
\]
In other words, the equations look like the ordinary Euler-Lagrange equations on $M$ together with the Euler-Poincar{\'e} equations on $\g$. In the case of a non-trivial bundle one has to use a principal connection on $Q$ to obtain global expressions for the Euler-Lagrange equations (see~\cite{GBHoRa,Co} for the details).


\section{Reduction and reconstruction}
\label{reduction.reconstruction}

It was proved in~\cite{VCLA} that a morphism of Lie algebroids $\map{{\Phi}}{E}{E'}$ defines a map between the path spaces $\map{\hat{{\Phi}}}{\PJ}{\PJ[E']}$ by composition $\hat{{\Phi}}(a)={\Phi}{\circ}a$. If ${\Phi}$ is fiberwise injective (surjective) then $\hat{{\Phi}}$ is an immersion (submersion).  Moreover, variational vector fields are mapped in a simple manner $\T^{\Phi}{\Phi}{\circ}{\Xi}_a{\sigma}={\Xi}_{{\Phi}{\circ}a}({\Phi}{\circ}{\sigma})$. As a consequence, a morphism induces relations between critical points of functions defined on path spaces, in particular between solutions of Euler-Lagrange equations for a first-order Lagrangian $L$ in $E$ and a first-order Lagrangian $L'=L{\circ}{\Phi}$ in $E'$, producing easy reduction and reconstruction results.

This correspondence can be easily extended to the higher-order case as follows. Consider a  Lagrangian $L{\in}\cinfty{E^k}$ and a Lagrangian $L'{\in}\cinfty{E'^k}$ which are related by the map $\map{\Phi^k}{E^k}{E'^k}$, that is, $L=L'\circ\Phi^k$. Then the associated action functionals $S$ on $\PJ$ and $S'$ on $\PJ[E']$ are related by $\hat{\Phi}$, that is,  $S'\circ \hat{\Phi}=S$. Indeed,
\begin{align*}
S'(\hat{\Phi}(a))
&=S'(\Phi\circ a)\\
&=\int_{t_0}^{t_1}(L'\circ(\Phi\circ a)^k)(t)\,dt\\
&=\int_{t_0}^{t_1}(L'\circ\Phi^k\circ a^k)(t)\,dt\\
&=\int_{t_0}^{t_1}(L\circ a^k)(t)\,dt\\
&=S(a).
\end{align*}
For the boundary conditions, if $A'_0={\Phi}^{k-1}(A_0)$ and $A'_1={\Phi}^{k-1}(A_1)$ then $\hat{{\Phi}}$ applies $\PJ_{A_0}^{A_1}$ into $\PJ[E']_{A'_0}^{A'_1}$, so that the corresponding variational problems are related. Indeed, if $a^{k-1}(t_i)=A_i$ then $({\Phi}{\circ}a)^{k-1}(t_i)=\Phi^{k-1}(a^{k-1}(t_i))=A'_i$, for~$i=0,1$.

Reduction theorems easily follows by considering fiberwise surjective morphisms of Lie algebroids. For the sake of clarity, we will omit any reference to the boundary conditions.

\begin{theorem}[Reduction]
\label{reduction}
Let $\map{\Phi}{E}{E'}$ be a fiberwise surjective morphism of Lie algebroids. Consider a Lagrangian $L$ on $E^k$ and a Lagrangian $L'$ on $E'^k$ such that $L=L'\circ\Phi^k$. If $a$ is a solution of the Euler-Lagrange equations for $L$ then $a'=\Phi\circ a$ is a solution of Euler-Lagrange equations for $L'$.
\end{theorem}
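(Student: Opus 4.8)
The plan is to deduce the result from the variational characterization of the Euler--Lagrange equations, using the identity $S'{\circ}\hat{\Phi}=S$ established just above. First I would fix an interval $[t_0,t_1]$ on which $a$ solves the Euler--Lagrange equations for $L$ and put $A_i=a^{k-1}(t_i)$, $A_i'={\Phi}^{k-1}(A_i)$, $a'={\Phi}{\circ}a$. Since ${\Phi}$ is in particular an admissible map, $a'$ is again an $E'$-path, with ${a'}^{k-1}(t_i)=A_i'$; both curves, being solutions of ODE systems, are smooth, so $a{\in}\PJ_{A_0}^{A_1}$ and $a'{\in}\PJ[E']_{A_0'}^{A_1'}$. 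By the theorem of Section~\ref{variational.description} characterizing critical points, $a$ is a critical point of the restriction of $S$ to $\PJ_{A_0}^{A_1}$, and it suffices to prove that $a'$ is a critical point of the restriction of $S'$ to $\PJ[E']_{A_0'}^{A_1'}$: the Euler--Lagrange equations for $L'$ will then hold along $a'$ on $[t_0,t_1]$, hence, the interval being arbitrary, everywhere.

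The heart of the matter is to show that $T_a\hat{\Phi}$ maps $T_a\PJ_{A_0}^{A_1}$ onto $T_{a'}\PJ[E']_{A_0'}^{A_1'}$. That $\hat{\Phi}$ sends $\PJ_{A_0}^{A_1}$ into $\PJ[E']_{A_0'}^{A_1'}$ was checked above, and its derivative carries the variation ${\Xi}_a^k{\sigma}$ to ${\Xi}_{a'}^k({\Phi}{\circ}{\sigma})$ --- the higher-order form of the naturality relation $\T^{\Phi}{\Phi}{\circ}{\Xi}_a{\sigma}={\Xi}_{{\Phi}{\circ}a}({\Phi}{\circ}{\sigma})$ recalled at the beginning of this section, which propagates to jets through the description ${\Xi}_a^k{\sigma}={\chi}_{k-1}{\circ}({\Xi}_a{\sigma})^{(k-1)}$ of Section~\ref{variational.vector.fields}. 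Now, ${\Phi}$ being fiberwise surjective, each linear map ${\Phi}_m\colon E_m\to E'_{{\varphi}(m)}$ is onto; choosing right inverses that vary smoothly along the base curve ${\gamma}={\tau}{\circ}a$ yields, for every section ${\sigma}'$ of $E'$ along ${\gamma}'$, a section ${\sigma}={\iota}{\circ}{\sigma}'$ of $E$ along ${\gamma}$ with ${\Phi}{\circ}{\sigma}={\sigma}'$; moreover ${\sigma}$ inherits the condition $[{\sigma}]^{k-1}(t_i)=0$ from ${\sigma}'$ by the Leibniz rule (in a local trivialization ${\sigma}$ equals a smooth matrix-valued function times ${\sigma}'$). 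Hence every admissible variation ${\Xi}_{a'}^k{\sigma}'$ of $a'$ is of the form $T_a\hat{\Phi}({\Xi}_a^k{\sigma})$, which is the required surjectivity.

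The conclusion is now immediate. Differentiating $S=S'{\circ}\hat{\Phi}$ gives $\d S(a)=\d S'(a'){\circ}T_a\hat{\Phi}$, so $\d S(a)=0$ forces $\d S'(a')$ to vanish on the image of $T_a\hat{\Phi}$, which is all of $T_{a'}\PJ[E']_{A_0'}^{A_1'}$ by the previous paragraph; therefore $\d S'(a')=0$ and, by the same theorem of Section~\ref{variational.description}, the curve $a'$ satisfies the Euler--Lagrange equations for $L'$.

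The only genuinely non-routine points are the two bookkeeping facts above: the higher-order naturality $T_a\hat{\Phi}({\Xi}_a^k{\sigma})={\Xi}_{a'}^k({\Phi}{\circ}{\sigma})$, and the preservation of the endpoint conditions under composition with ${\iota}$; everything else is the chain rule. One could also avoid the submersion picture and argue pointwise: for a test section ${\sigma}'$ of $E'$ along ${\gamma}'$ with $[{\sigma}']^{k-1}(t_i)=0$ and its lift ${\sigma}={\iota}{\circ}{\sigma}'$, the integration-by-parts formula of Section~\ref{variational.description} together with the chain rule give
\[
\int_{t_0}^{t_1}\pai{{\delta}L'({a'}^{2k}(t))}{{\sigma}'(t)}\,dt
=\pai{\d S'(a')}{{\Xi}_{a'}{\sigma}'}
=\pai{\d S(a)}{{\Xi}_a{\sigma}}
=\int_{t_0}^{t_1}\pai{{\delta}L(a^{2k}(t))}{{\sigma}(t)}\,dt=0,
\]
whence ${\delta}L'({a'}^{2k}(t))=0$ for all $t$ by the fundamental lemma of the calculus of variations; this route relies on exactly the same naturality relation and nothing further.
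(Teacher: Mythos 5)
Your proposal is correct and follows essentially the same route as the paper: differentiate the identity $S=S'\circ\hat{\Phi}$ and use that fiberwise surjectivity of $\Phi$ makes the differential of $\hat{\Phi}$ surjective onto the admissible variations of $a'$. The paper compresses this to the one-line citation that $\hat{\Phi}$ is a submersion (and explicitly omits the boundary conditions), whereas you spell out the higher-order naturality $T_a\hat{\Phi}({\Xi}_a^k{\sigma})={\Xi}_{a'}^k({\Phi}{\circ}{\sigma})$, the lifting ${\sigma}={\iota}{\circ}{\sigma}'$ and the preservation of the endpoint jet conditions --- welcome extra detail, but not a different argument.
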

\begin{proof}
Since $S'\circ \hat{\Phi}=S$ we have that $\pai{\d S'(\hat{\Phi}(a))}{T_a\hat{\Phi}(v)} =\pai{\d S(a)}{v}$ for every $v\in T_a\PJ_{A_0}^{A_1}$. If $\Phi$ is fiberwise surjective, then $\hat{\Phi}$ is a submersion, from where it follows that $\hat{\Phi}$ maps critical points of $S$ into critical points of $S'$, i.e. solutions of the Euler-Lagrange  equations for $L$ into solutions of Euler-Lagrange equations for~$L'$.
\end{proof}

We can reduce partially a system and then reduce again the obtained system. The final result obviously coincides with the system obtained by the total reduction.

\begin{theorem}[Reduction by stages]
\label{reduction.by.stages}
Let $\map{\Phi_1}{E}{E'}$ and $\map{\Phi_2}{E'}{E''}$ be fiberwise surjective morphisms of Lie algebroids. Let $L$, $L'$ and $L''$ be Lagrangian functions on $E^k$, $E'^k$ and $E''^k$, respectively, such that $L'\circ\Phi_1^k=L$ and $L''\circ\Phi_2^k=L'$. Then the result of reducing first by $\Phi_1$ and later by $\Phi_2$ coincides with the reduction by $\Phi=\Phi_2\circ\Phi_1$.
\end{theorem}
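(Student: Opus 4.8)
The plan is to reduce the statement to the one-step reduction result, Theorem~\ref{reduction}, by exploiting the functoriality of all the constructions involved. Put $\Phi=\Phi_2\circ\Phi_1$. First I would record three elementary facts. (i) The composition $\Phi$ of morphisms of Lie algebroids is again a morphism of Lie algebroids, and it is fiberwise surjective because a composition of fiberwise surjective linear maps is fiberwise surjective; hence $\Phi$ is an admissible target for Theorem~\ref{reduction}. (ii) Jet prolongation is a functor, $\Phi^k=\Phi_2^k\circ\Phi_1^k$, which is immediate from the defining formula $\Psi^k([a]^{k-1})=[\Psi\circ a]^{k-1}$ applied twice (this is the description of the prolongation of admissible maps given in Section~\ref{jets.of.paths}). (iii) The path-space assignment is functorial as well, $\hat{\Phi}=\hat{\Phi}_2\circ\hat{\Phi}_1$, since $\hat{\Psi}(a)=\Psi\circ a$.

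Next I would check that the Lagrangians are compatible with the composite morphism. Using $L''\circ\Phi_2^k=L'$ and $L'\circ\Phi_1^k=L$ together with (ii),
\[
L''\circ\Phi^k=L''\circ\Phi_2^k\circ\Phi_1^k=L'\circ\Phi_1^k=L,
\]
so that the hypotheses of Theorem~\ref{reduction} are satisfied by the triple $(\Phi,L,L'')$. Therefore $\Phi$ defines a genuine one-step reduction sending solutions of the Euler-Lagrange equations for $L$ to solutions of those for $L''$.

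Finally I would compare the two procedures at the level of solutions. Reducing first by $\Phi_1$ carries a solution $a$ of the Euler-Lagrange equations for $L$ to the solution $a'=\hat{\Phi}_1(a)=\Phi_1\circ a$ for $L'$ (Theorem~\ref{reduction}, whose hypotheses hold since $L'\circ\Phi_1^k=L$), and reducing $a'$ by $\Phi_2$ yields $a''=\hat{\Phi}_2(a')=\Phi_2\circ a'$ for $L''$ (again Theorem~\ref{reduction}, using $L''\circ\Phi_2^k=L'$). By (iii), $a''=\hat{\Phi}_2(\hat{\Phi}_1(a))=\hat{\Phi}(a)=\Phi\circ a$, which is exactly the solution produced by the one-step reduction along $\Phi$. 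Equivalently, at the level of action functionals $S''\circ\hat{\Phi}_2=S'$ and $S'\circ\hat{\Phi}_1=S$ give $S''\circ\hat{\Phi}=S$ with $\hat{\Phi}$ a submersion, so critical points of $S$ are carried to critical points of $S''$ the same way whether one applies $\hat{\Phi}_2\circ\hat{\Phi}_1$ in two stages or $\hat{\Phi}$ at once. I do not anticipate a real obstacle; the only point needing a word of care is the functoriality (ii) of the prolongation $\Psi\mapsto\Psi^k$, but this is already built into the discussion of jet prolongation of admissible maps, where $\Phi^k$ is identified with the restriction of $T^{k-1}\Phi$ and these restrictions are shown to be compatible with the projections ${\tau}_{k,l}$.
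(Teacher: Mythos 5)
Your argument is correct and follows the same route as the paper, whose entire proof is the observation that $\Phi=\Phi_2\circ\Phi_1$ is again a fiberwise surjective morphism of Lie algebroids; you have simply made explicit the functoriality of the prolongation and path-space constructions that the paper leaves implicit. No gaps.
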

\begin{proof}
It is obvious since $\Phi=\Phi_2\circ\Phi_1$ is also a fiberwise surjective morphism of Lie algebroids.
\end{proof}

As an example of the above situation we can consider a system with a group of symmetry $G$. If $N$ is closed normal subgroup of $G$ we can reduce first by $N$ and later by $G/N$. Alternatively we can reduce directly by the full group $G$. The result of both procedures is the same. See~\cite{CeMaRa,NHLSLA} for the first order case.

\bigskip

For the reconstruction of solutions we have the following result, which is immediate from the variational formalism.
\begin{theorem}[Reconstruction]
\label{reconstruction}
Let $\map{\Phi}{E}{E'}$ be a morphism of Lie algebroids. Consider a Lagrangian $L$ on $E^k$ and a Lagrangian $L'$ on $E'^k$ such that $L=L'\circ\Phi^k$. If $a$ is an $E$-path and $a'=\Phi\circ a$ is a solution of the Euler-Lagrange equations for $L'$ then $a$ itself is a solution of the Euler-Lagrange equations for $L$.
\end{theorem}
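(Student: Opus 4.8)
The plan is to deduce the statement directly from the identity $S'\circ\hat{\Phi}=S$ established just above, combined with the fact, proved in the theorem above, that the solutions of the Euler-Lagrange equations are precisely the critical points of the action functional. The argument is essentially a one-line chain-rule computation, so I will only indicate the bookkeeping.

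First I would fix the boundary data intrinsically: put $A_i=a^{k-1}(t_i)$ and $A'_i={\Phi}^{k-1}(A_i)$ for $i=0,1$. As observed in the discussion preceding Theorem~\ref{reduction}, $\hat{\Phi}$ then maps $\PJ_{A_0}^{A_1}$ into $\PJ[E']_{A'_0}^{A'_1}$, with $a\in\PJ_{A_0}^{A_1}$ and $a'=\hat{\Phi}(a)=\Phi\circ a\in\PJ[E']_{A'_0}^{A'_1}$, and on $\PJ_{A_0}^{A_1}$ one has $S=S'\circ\hat{\Phi}$. Differentiating this identity at the point $a$ yields
\[
\pai{\mathrm{d}S(a)}{v}=\pai{\mathrm{d}S'(a')}{T_a\hat{\Phi}(v)}\qquad\text{for every }v\in T_a\PJ_{A_0}^{A_1}.
\]
By hypothesis $a'$ solves the Euler-Lagrange equations for $L'$, so applying the variational characterization to $L'$ on $\PJ[E']_{A'_0}^{A'_1}$ we get that $a'$ is a critical point of $S'$, i.e.\ $\mathrm{d}S'(a')=0$. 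Substituting, $\pai{\mathrm{d}S(a)}{v}=0$ for all $v$, hence $a$ is a critical point of $S$ on $\PJ_{A_0}^{A_1}$; applying the variational characterization once more, now to $L$, gives ${\delta}L(a^{2k}(t))=0$ for all $t\in J$, which is the claim. Since the Euler-Lagrange equations are pointwise in $t$, the particular choice of $J$ and of the endpoints is irrelevant.

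I do not anticipate a genuine obstacle — this is why the result is \textsl{immediate} from the variational formalism. The two points that deserve a line each are structural rather than computational. The first is that $\hat{\Phi}$ is a smooth map between the relevant Banach manifolds, which is exactly what the results of~\cite{VCLA} recalled at the start of the section provide, so that differentiating $S=S'\circ\hat{\Phi}$ is legitimate; note that, unlike in Theorem~\ref{reduction}, no rank condition on $\Phi$ is needed here, since one only uses that the functional $\mathrm{d}S'(a')$ is zero, and the zero functional annihilates $T_a\hat{\Phi}(v)$ whatever the linear map $T_a\hat{\Phi}$ happens to be. The second is the regularity of $a$: for the conclusion to be meaningful $a$ must be smooth enough for $a^{2k}$ to be defined, and one should check (or adopt as a standing convention, as the paper does) that $a$ is of class $C^k$, so that it genuinely is a point of $\PJ_{A_0}^{A_1}$.
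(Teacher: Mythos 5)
Your argument is the same as the paper's: differentiate the identity $S=S'\circ\hat{\Phi}$ at $a$, use $\d S'(a')=0$ (which holds by the variational characterization of solutions of the Euler--Lagrange equations) to conclude $\d S(a)=0$, and translate back. The extra remarks you add — that no surjectivity of $\Phi$ is needed because the zero functional pulls back to zero regardless of the rank of $T_a\hat{\Phi}$, and that $a$ must be regular enough to lie in $\PJ_{A_0}^{A_1}$ — are correct and consistent with the paper, which states the theorem for a general morphism and handles boundary conditions exactly as you do.
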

\begin{proof}
Since $S'\circ \hat{\Phi}=S$ we have that $\pai{\d S'(\hat{\Phi}(a))}{T_a\hat{\Phi}(v)} =\pai{\d S(a)}{v}$ for every $v\in T_a\PJ_{A_0}^{A_1}$. If $\hat{\Phi}(a)$ is a solution of Lagrange's equations for $L'$ then $\d S'(\hat{\Phi}(a))=0$, from where it follows that $\d S(a)=0$.
\end{proof}

The reconstruction procedure can be understood as follows. Consider a fiberwise surjective morphism $\map{\Phi}{E}{E'}$ and the associated reduction map $\map{\hat{\Phi}}{\PJ}{\PJ[E']}$. Given an $E'$-path $a'\in\PJ[E']$ solution of the dynamics defined by the Lagrangian $L'$, we look for an $E$-path $a\in\PJ$ solution of the dynamics for the Lagrangian $L=L'\circ\Phi$, such that $a'=\hat{\Phi}(a)$. For that, it is sufficient to find a map $\map{{\Upsilon}}{\PJ[E']}{\PJ}$ such that $\hat{\Phi}\circ{\Upsilon}=\id_{\PJ[E']}$. Indeed, given the $E'$-path $a'$ solution for the reduced Lagrangian $L'$, the curve $a={\Upsilon}(a')$ is an $E$-path and satisfy $\Phi\circ a=a'$. From Reconstruction Theorem~\ref{reconstruction} we deduce that $a$ is a solution of the Euler-lagrange equations for the original Lagrangian. 

When specifying the boundary conditions is necessary, the map ${\Upsilon}$ must restrict to a map $\map{{\Upsilon}_{A_0}^{A_1}}{\PJ[E']_{A'_0}^{A'_1}}{\PJ[E]_{A_0}^{A_1}}$ where $A'_0={\Phi}^{k-1}(A_0)$ and $A'_1={\Phi}^{k-1}(A_1)$.

Of course, one can define several maps ${\Upsilon}$, and different maps will produce different $E$-paths $a$ for the same $E'$-path $a'$. In most cases of interest, ${\Phi}$ is fiberwise bijective, so that $\hat{{\Phi}}$ is a local diffeomorphism, and the reconstruction process can be done with the help of some global diffeomorphism $\map{\bar{{\Phi}}}{\PJ}{P\times\PJ[E']}$, for some manifold $P$, such that $\pr_2{\circ}\bar{{\Phi}}=\hat{{\Phi}}$. Then fixing $p{\in}P$ we define ${\Upsilon}_p(a')=\bar{{\Phi}}^{-1}(p,a')$. In this way the set of solutions to which $a'$ can be lifted is parametrized by $P$. See bellow for some explicit examples of application of this procedure (and some extensions).

\bigskip

Next, we present some examples where the reduction process indicated above can be applied.


\subsection{Euler-Poincar{\'e} reduction}
Consider a Lie group $G$ and its Lie algebra~$\g$. The map $\map{\Phi}{TG}{\g}$ given by $\Phi(g,\dot{g})=g^{-1}\dot{g}$ is a fiberwise bijective morphism of Lie algebroids. The $k$-jet prolongation $\map{{\Phi}^k}{T^kG}{\g^k}$ of ${\Phi}$ is ${\Phi}^k([g]^k)=T^k\ell_{g^{-1}}[g]^k=[g^{-1}\dot{g}]^{k-1}$, which can be explicitly written as 
\[
{\Phi}^k(g,\dot{g},\ldots,g^{(k)})=({\xi},\dot{{\xi}},\ldots,{\xi}^{(k-1)})\qquad\text{with ${\xi}=g^{-1}\dot{g}$.}
\] 
Assume that $L$ is a left-invariant Lagrangian function on $T^kG$, i.e.
\[
L(T^k\ell_{h^{-1}}[g]^k)=L([h^{-1}g]^k)=L([g]^k)
\]
for every element $h{\in}G$ and every curve $g(t)$ on $G$. Then we can define a Lagrangian $L'$ on $\g^k$ by means of 
\[
L'([{\xi}]^{k-1})=L([g_e]^k)
\]
where $g_e$ is the solution of the differential equation $\dot{g}=T\ell_g{\xi}(t)$ with initial value $g_e(0)=e$. In other words we have $L([g]^k)=L'([g^{-1}\dot{g}]^{k-1})$ for every curve $g(t)$ in $G$, or equivalently $L=L'{\circ}{\Phi}^k$. It follows that if $a$ is a solution of the higher-order Euler-Lagrange for $L$ in the Lie group $G$ then $a'={\Phi}{\circ}a$ is a solution of the higher-order Euler-Poincare for $L'$ on $\g$. Moreover, since $\Phi$ is fiberwise bijective every solution can be found in this way, so that the higher-order Euler-Lagrange equations on the group reduce to the higher-order Euler-Poincar{\'e} equations on the Lie algebra.

For the reconstruction process we note that the map $\map{\bar{{\Phi}}}{\PJ[TG]}{G\times\PJ[\g]}$ given by $\bar{{\Phi}}(g,\dot{g})=(g(t_0), T\ell_{g^{-1}}\dot{g})$ is a global diffeomorphism. Its inverse is the map $\bar{{\Phi}}^{-1}(g_0,{\xi})=(g,\dot{g})$ where $g(t)$ is the integral curve of the left-invariant time-dependent vector field $X(t,g)=T\ell_g{\xi}(t)$, with initial value $g(t_0)=g_0$. Thus the map ${\Upsilon}_{g_0}({\xi})=\bar{{\Phi}}^{-1}(g_0,{\xi})$ provides a reconstruction map. Notice that the curve $g(t)$ above is $g(t)=g_0g_e(t)$ where $g_e(t)$ is the integral curve of $X$ with $g_e(t_0)=e$.


\subsection{Lie groupoid reduction}
For integrable Lie algebroids, the variational principle developed here can be obtained via a reduction of an ordinary higher-order variational principle with holonomic constraints. In the first-order case this was the argument used by Weinstein~\cite{Weinstein} to obtain a variational principle on (integrable) Lie algebroids. Here we consider the higher-order case.

Consider a Lie groupoid $\G$ over $M$ with source $\s$ and target $\t$,  and denote by $E$ its Lie algebroid, $E=\cala(\G)$. Denote by $T^{\s}\G\to\G$ the kernel of $T\s$ with the structure of Lie algebroid as integrable subbundle of $T\G$. The integral manifolds of $T^{\s}\G$ are the $\s$-fibres of the groupoid, so that $T^{\s}\G$ is the distribution tangent to the foliation $\cals={\cup}_{m{\in}M}\s^{-1}(m)$, and hence $(T^{\s}\G)^k=(T\cals)^k=T^k\cals$. 

The map $\map{\Phi}{T^{\s}\G}{E}$ given by left translation to the identities, $\Phi(v_g)=T\ell_{g^{-1}}(v_g)$ is a fiberwise bijective morphism of Lie algebroids. As a consequence, if $L$ is a Lagrangian function on $E^k$ and  $\boldsymbol{L}$ is the associated left invariant Lagrangian on $(T^{\s}\G)^k$, then the solutions of the higher-order Euler-Lagrange equations for $\boldsymbol{L}$ (which are the holonomic Euler-Lagrange equations) project by $\Phi$ to solutions of the higher-order Euler-Lagrange equations on the Lie algebroid $E$.

For the reconstruction process we consider the manifold $\G\times_M\PJ$ defined by  $\G\times_M\PJ=\set{(g_0,a){\in}\G\times\PJ}{\t(t_0)={\tau}(a(t_0))}$. The map $\map{\bar{{\Phi}}}{\PJ[T^{\s}\G]}{\G\times_M\PJ}$ given by $\bar{{\Phi}}(g,\dot{g})=(g(t_0), T\ell_{g^{-1}}\dot{g})$ is a global diffeomorphism. Its inverse is given by $\bar{{\Phi}}^{-1}(g_0,a)=(g,\dot{g})$ where $g(t)$ is the integral curve of the left-invariant time-dependent vector field $X(t,g)=T\ell_{g}a(t)$ with initial value $g(t_0)=\boldsymbol{{\epsilon}}(\t(g_0))$ (the identity at the point $\t(g_0)$). Thus the map ${\Upsilon}_{g_0}({\xi})=\bar{{\Phi}}^{-1}(g_0,{\xi})$ provides a reconstruction map. 

Notice that in this occasion ${\Upsilon}_{g_0}$ is defined as a map  $\map{{\Upsilon}_{g_0}}{\PJ_{m_0}}{\PJ[T^{\s}\G]_{g_0}}$ where $\PJ[T^{\s}\G]_{g_0}=\set{(g,\dot{g}){\in}\PJ}{g(t_0))=g_0}$, $m_0=\t(g_0)$ and $\PJ_{m_0}=\set{a{\in}\PJ}{{\tau}(a(t_0))=m_0}$.


\subsection{Euler-Poincar{\'e} reduction with advected parameters}
Let $G$ be a Lie group acting from the right on a manifold $M$. We consider the Lie algebroid $E=M\times TG\to M\times G$ where $M$ is a parameter manifold, that is, the anchor is $\rho(m,v_g)=(0_m,v_g)$ and the bracket is determined by the standard bracket of vector fields on $G$, i.e. of sections of $TG\to G$, depending on the coordinates in $M$ as parameters. Consider also the transformation Lie algebroid $E'=M \times \g\to M$, where $\rho(m,\xi)=\xi_M(m)$, ($\xi_M$ being the fundamental vector field associated to $\xi\in\g$). The map $\map{{\Phi}}{M\times TG}{M\times\g}$ given by $\Phi(m,v_g)=(mg,g^{-1}v_g)$ is a morphism of Lie algebroids over the action map $\varphi(m,g)=mg$, and it is fiberwise bijective.

Consider a Lagrangian  $L$ on $M\times T^kG$ depending on the elements of $M$ as parameters $L(m,[g]^k)=L_m([g]^k)$. Assume that $L$ is not left invariant but obeys to the following transformation rule 
\[
L(m,T^k\ell_h[g]^k)=L(m,[hg]^k)=L(mh,[g]^k),
\]
for every element $h{\in}G$ and every curve $g(t)$ in $G$ (equivalently $L$ is invariant by the joint left action $L(mh^{-1},[hg]^k)=L(m,[g]^k)$).

We consider the Lagrangian $L'$ on $E'^k$ given by $L'(m,[{\xi}]^{k-1})=L(m,[g_e]^k)$, where $g_e(t)$ is the solution of the initial value problem $\dot{g}=T\ell_g{\xi}(t)$, $g(t_0)=e$. Then $L'\circ\Phi^k=L$. The variables $m$ which initially are parameters are now dynamic variables due to the group action and are called advected parameters. In this way, solutions of the higher-order Euler-Lagrange equations for $L$ (standard Euler-Lagrange equations with parameters) are mapped by $\Phi$ to solutions of the higher-order Euler-Lagrange equations for $L'$, i.e.\ the higher-order Euler-Poincar{\'e} equations with advected parameters.

For the reconstruction process we consider the global diffeomorphism $\map{\bar{{\Phi}}}{\PJ[M\times TG]}{G\times\PJ[M\times\g]}$ given by $\bar{{\Phi}}\bigl(m, (g,\dot{g})\bigr)=\bigl(g(t_0), (mg,T\ell_{g^{-1}}\dot{g})\bigr)$. Its inverse is $\bar{{\Phi}}^{-1}\bigl(g_0,(m,{\xi})\bigr)=\bigl(mg^{-1},(g,\dot{g})\bigr)$ where $g(t)$ is the integral curve of the left-invariant time-dependent vector field $X(t,g)=T\ell_{g}{\xi}(t)$ with $g(t_0)=g_0$. Thus the map ${\Upsilon}_{g_0}(m,{\xi})=\bar{{\Phi}}^{-1}\bigl(g_0,(m,{\xi})\bigr)$ provides a reconstruction map.


\subsection{Lagrange-Poincar{\'e} reduction}
We consider a Lie group $G$ acting free and properly on a manifold $Q$, so that the quotient map $\map{p}{Q}{M=Q/G}$ has the structure of a principal bundle. We consider the standard Lie algebroid structure on $E=TQ$ and the Atiyah algebroid $E'=TQ/G\to M$. The quotient map $\map{\Phi}{TQ}{TQ/G}$, $\Phi(v)=[v]_G$ is a Lie algebroid morphism and it is fiberwise bijective. Every $G$-invariant Lagrangian on $T^kQ$ defines uniquely a Lagrangian $L'$ on $E'^k$ such that $L'\circ\Phi^k=L$, explicitly given by
\[
L'([\,[\dot{q}]_G\,]^{k-1})=L'([\,[q]^k\,]_G)=L([q^k]).
\] 
Therefore every solution of the $G$-invariant Lagrangian on $T^kQ$ projects to a solution of the reduced Lagrangian on $(TQ/G)^k\equiv T^kQ/G$, and every solution on the reduced space can be obtained in this way. Thus, the higher-order Euler-Lagrange equations on the principal bundle reduce to the higher-order Lagrange-Poincar{\'e} equations on the Atiyah algebroid~\cite{GBHoRa}.

For the reconstruction process we consider the manifold $Q\times_M\PJ[TQ/G]$ defined by $Q\times_M\PJ[TQ/G]=\set{(q_0,a){\in}Q\times\PJ[TQ/G]}{p(t_0)={\tau}(a(t_0))}$. The map $\map{\bar{{\Phi}}}{\PJ[TQ]}{Q\times_M\PJ[TQ/G]}$ given by $\bar{{\Phi}}(q,\dot{q})=(q(t_0), [\dot{q}]_G)$ is a global diffeomorphism. Its inverse is given by $\bar{{\Phi}}^{-1}(q_0,a)=(q,\dot{q})$ where $q(t)$ is the curve determined as follows. The curve $a(t)$ is admissible, so that it is of the form $a(t)=[\bar{q}(t),\dot{\bar{q}}(t)]_G$; if $g_0{\in}G$ is the unique element in the group such that $\bar{q}(t_0)=g_0q_0$, then we define $q(t)=g_0^{-1}\bar{q}(t)$. 
Therefore the map ${\Upsilon}_{q_0}(a)=\bar{{\Phi}}^{-1}(q_0,a)$ provides a reconstruction map, which in this case is  defined as a map  $\map{{\Upsilon}_{q_0}}{\PJ[TQ/G]_{m_0}}{\PJ[TQ]_{q_0}}$ where $m_0=p(q_0)$. 

\section{Conclusions and outlook}

In this paper we have obtained the Euler-Lagrange equations for a higher-order Lagrangian. The emphasis has been on the variational description using the tools of variational calculus on Lie algebroids developed in~\cite{VCLA}.

However, there are other many aspects of the theory that we have left out and will be studied in future. The variational structure strongly suggests that a symplectic formalism is possible, similar to Klein's formalism~\cite{Klein} of standard Lagrangian Mechanics and its generalization to Lie algebroids~\cite{LMLA}, and the corresponding Hamiltonian version~\cite{Medina, LSDLA}. It is interesting to study such formalisms and the transformation properties of the symplectic form.

On the other hand, a geometric version of Pontryagin maximum principle which allows reduction in the setting of Lie algebroids was studied in~\cite{ROCT, SIGMA} (a direct proof was given in~\cite{GrJo}, based on needle variations~\cite{Pontryagin} and the results in~\cite{VCLA}). It is interesting to  study the relation between that results with those on this paper and the proposed Hamiltonian versions already mentioned, on one hand, and with other `variational principles' that appeared recently in the literature (Hamilton-Pontryagin~\cite{YoMa}, Clebsh-Pontryagin~\cite{GBRa}, etc.), on the other.
 
Concerning the optimality properties of the solutions, in this paper we have obtained only first order conditions, that is conditions for critical points of the action. To characterize those which are local maxima/minima we will study further optimality conditions. Since our results are based on a \textsl{bona fide} variational principle it is expected that the Hessian of the action contains such information. Moreover, when two Lagrangians are related by a morphism the Hessians at corresponding critical points should be also related.

Problems with time dependent Lagrangians can be treated by embedding $E$ into $T\Real\times E$ as it is mentioned in~\cite{VCLA}. However the formalism introduced in~\cite{LASLSAB, GrGrUr-Aff} for first-order systems can be easily generalized to the present case.

All this problems and the possible generalizations to field theory, following the ideas in~\cite{CFTLAVA}, will be studied elsewhere.


\end{document}